\let\accentvec\vec
\let\vec\accentvec
\let\hole\undefined
\newcommand{\keywords}[1]{\par\addvspace\baselineskip
\noindent\keywordname\enspace\ignorespaces#1}
\begin{document}

\mainmatter  

\title{The Call-by-need Lambda Calculus, Revisited}

\titlerunning{The Call-by-need Lambda Calculus, Revisited}

%
%
\author{Stephen Chang \and Matthias Felleisen
}
\authorrunning{Stephen Chang \and Matthias Felleisen}

\institute{College of Computer Science\\
           Northeastern University\\
           Boston, Massachusetts, USA\\
\texttt{\{ stchang $\mid$ matthias \} @ ccs.neu.edu}
}

%
%

\toctitle{Lecture Notes in Computer Science}
\tocauthor{Authors' Instructions}
\maketitle

\begin{abstract}
The existing call-by-need \lcs describe lazy evaluation via equational
 logics. A programmer can use these logics to safely ascertain whether one term
 is behaviorally equivalent to another or to determine the value of a lazy
 program. However, neither of the existing calculi models evaluation in a way
 that matches lazy implementations.
 
\hspace{10pt} Both calculi suffer from the same two problems. First, the
 calculi never discard function calls, even after they are completely
 resolved. Second, the calculi include re-association axioms even though these
 axioms are merely administrative steps with no counterpart in any
 implementation.

\hspace{10pt} In this paper, we present an alternative axiomatization of lazy
 evaluation using a single axiom. It eliminates both the function call
 retention problem and the extraneous re-association axioms. Our axiom uses a
 grammar of contexts to describe the exact notion of a \emph{needed
   computation}. Like its predecessors, our new calculus satisfies consistency
 and standardization properties and is thus suitable for reasoning about
 behavioral equivalence. In addition, we establish a correspondence between
 our semantics and Launchbury's natural semantics.

\keywords{call-by-need, laziness, lambda calculus}
\end{abstract}

\section{A Short History of the $\lambda$ Calculus}

Starting in the late 1950s, programming language researchers began to look
 to Church's \lc~\cite{Church41:CalcOfLambdaConv} for inspiration. Some
 used it as an analytic tool to understand the syntax and semantics of
 programming languages, while others exploited it as the basis for new 
 languages. By 1970, however, a disconnect had emerged in the form of
 call-by-value programming, distinct from the notion of $\beta$ and
 normalization in Church's original calculus. Plotkin~\cite{Plotkin1975LC}
 reconciled the \lc and Landin's SECD machine for the ISWIM language~\cite{pl:iswim} with
 the introduction of a notion of correspondence and with a proof that two
 distinct variants of the \lc corresponded to two distinct variants of the
 ISWIM programming language: one for call-by-value and one for
 call-by-name.

In the early 1970s, researchers proposed  call-by-need~\cite{Friedman1976Cons,Henderson1976Lazy,Wadsworth1971Thesis},
 a third kind of parameter passing mechanism
 that could be viewed as yet another variant of the ISWIM
 language. Call-by-need is supposed to represent the best of both worlds. While
 call-by-value ISWIM always evaluates the argument of a function, the
 call-by-name variant evaluates the argument every time it is needed. Hence, if
 an argument (or some portion) is never needed, call-by-name wins; otherwise
 call-by-value is superior because it avoids re-evaluation of
 arguments. Call-by-need initially proceeds like call-by-name, evaluating
 a function's body before the argument---until the value of the argument is
 needed; at that point, the argument is evaluated and the resulting value is
 used from then onward. In short, call-by-need evaluates an argument at
 most once, and only if needed.

Since then, researchers have explored a number of characterizations of
 call-by-need~\cite{Danvy2010Cbneed,Friedman2007LazyKrivine,Garcia2009Delimited,Josephs1989Lazy,Nakata2009Cbneed,PeytonJones1989Spineless,Purushothaman1992Lazy}.
 Concerning this paper, three stand out. Launchbury's
 semantics~\cite{Launchbury1993Natural} specifies the meaning of complete
 programs with a Kahn-style natural semantics. The call-by-need \lcs of Ariola
 and Felleisen~\cite{Ariola1994Cbneed,Ariola1997Cbneed,Ariola1995Cbneed}, and
 of Maraist, Odersky, and
 Wadler~\cite{Ariola1995Cbneed,Maraist1994Cbneed,Maraist1998Cbneed} are
 equational logics in the spirit of the \lc.

The appeal of the \lc has several reasons. First, a calculus is sound
 with respect to the observational (behavioral) equivalence
 relation~\cite{Morris1968Thesis}. It can therefore serve as the starting
 point for other, more powerful logics. Second, its axioms are rich
 enough to mimic machine evaluation, meaning programmers can reduce
 programs to values without thinking about implementation details. Finally,
 the \lc gives rise to a substantial
 meta-theory~\cite{Barendregt1985LC,CurryFeys1958Combinatory} from which
 researchers have generated useful and practical results for its cousins.

Unfortunately, neither of the existing by-need calculi model lazy evaluation
 in a way that matches lazy language implementations. Both calculi suffer from
 the same two problems. First, unlike the by-name and by-value calculi, the
 by-need calculi never discard function calls, even after the call is resolved
 and the argument is no longer needed. Lazy evaluation does require some
 accumulation of function calls due to the delayed evaluation of arguments but
 the existing calculi adopt the extreme solution of retaining every
 call. Indeed, the creators of the existing calculi acknowledge that a solution
 to this problem would strengthen their work but they could not figure out a
 proper solution.

Second, the calculi include re-association axioms even though these axioms
 have no counterpart in any implementation. The axioms are mere
 administrative steps, needed to construct $\beta$-like redexes. Hence, they
 should not be considered computationally on par with other axioms.

In this paper, we overcome these problems with an alternative
 axiomatization. Based on a single axiom, it avoids the retention of 
 function calls and eliminates the extraneous re-association axioms. The single axiom
 uses a grammar of contexts to describe the exact notion of a \emph{needed
   computation}. Like its predecessors, our new calculus satisfies consistency
 and standardization properties and is thus suitable for reasoning about
 behavioral equivalence. In addition, we establish an intensional
 correspondence with Launchbury's semantics.
 
The second section of this paper recalls the two existing by-need calculi in
 some detail. The third section presents our new calculus, as well as a way to
 derive it from Ariola and Felleisen's calculus. Sections~\ref{sec:essential}
 and \ref{sec:correctness} show that our calculus satisfies the usual
 meta-theorems and that it is correct with respect to Launchbury's
 semantics. Finally, we discuss some possible extensions.

\section{The Original Call-by-need $\lambda$ Calculi}

The original call-by-need \lcs are independently due to two groups: Ariola and
 Felleisen~\cite{Ariola1994Cbneed,Ariola1997Cbneed} and Maraist, et
 al.~\cite{Maraist1994Cbneed,Maraist1998Cbneed}. They were jointly presented at
 POPL in 1995~\cite{Ariola1995Cbneed}. Both calculi use the standard set of
 terms as syntax:
\begin{align*} \tag{Terms} \e = x \mid \lxe \mid \ap{\e}{\e} \end{align*}
 Our treatment of syntax employs the usual
 conventions, including Barendregt's standard hygiene condition
 for variable bindings~\cite{Barendregt1985LC}. Figure~\ref{fig:lneeds}
 specifies the calculus of Maraist et al., \lmow, and \laf,
 Ariola and Felleisen's variant. Nonterminals in some grammar productions have
 subscript tags to differentiate them from similar sets elsewhere in the
 paper. Unsubscripted definitions have the same denotation in all systems.

%
%
\newcommand{\smallmid}{\hspace{-.5pt}\mid\hspace{-0.5pt}}
\begin{figure}[htbp]
\vspace{-10pt}
\hspace{-14pt}\begin{minipage}[b]{0.49\linewidth}
  \begin{align*}
    \valmow &= x \mid \lxe \\
    \ \\
    \C   &= \hole \mid \lx\C \mid \ap{\C}{\e} \mid \ap{\e}{\C} 
  \end{align*}
  \begin{align*}
    \tag{\Vnr} \app{\lxCx}{\valmow} &= \app{\lx\inC{\valmow}}{\valmow} \\
    \tag{\Cnr} \app{\lxe_1}{\ap{\e_2}{\e_3}} &= \app{\lx\ap{\e_1}{\e_3}}{\e_2} \\
    \tag{\Anr} (\lxe_1)(\app{\lye_2&}{\e_3}) = {} \\ &\app{\ly\app{\lxe_1}{\e_2}}{\e_3} \\
    \tag{\Gnr} \app{\lxe_1}{\e_2} &= \e_1,  x \notin \textit{fv}(\e_1)
  \end{align*}
\end{minipage}
\rule[-6pt]{.1pt}{4.5cm}
%
\hspace{-4pt}\begin{minipage}[b]{0.55\linewidth}
  \begin{align*}
    \val &= \lxe \\
    \ansaf &= \val \mid \app{\lx\ansaf}{\e} \\
    \Eaf &= \hole \smallmid \ap{\Eaf}{\e} \smallmid \app{\lx\Eaf}{\e} \smallmid \app{\lxEafx}{\Eaf} 
  \end{align*}
  \begin{align*}
    \tag{\derefnr} \app{\lxEafx}{\val} &= \app{\lx\inEaf{\val}}{\val} \\
    \tag{\liftnr}  \app{\lx\ansaf}{\ap{\e_1}{\e_2}} &= \app{\lx\ap{\ansaf}{\e_2}}{\e_1} \\
    \tag{\assocnr} \appp{\lxEafx}{\app{\ly\ansaf}{\e}} &= \\ &\hspace{-7mm}\app{\ly\app{\lxEafx}{\ansaf}}{\e}\\
    \ 
  \end{align*}
\end{minipage}
\vspace{-10pt}
  \caption{Existing call-by-need \lcs (left: \lmow, right: \laf)}
  \label{fig:lneeds}
\vspace{-10pt}
\end{figure}

In both calculi, the analog to the $\beta$ axiom---also called a
 \textit{basic notion of reduction}~\cite{Barendregt1985LC}---replaces variable
 occurrences, one at a time, with the value 
 of the function's argument. Value substitution means that there is no
 duplication of work as far as argument evaluation is concerned. The function
 call is retained because additional variable occurrences in the function
 body may need the argument.
 Since function calls may accumulate, the calculi come with axioms
 that re-associate bindings to
 pair up functions with their arguments. For example,
 re-associating $\app{\lx\app{\ly\lz z}{\val_y}}{\ap{\val_x}{\val_z}}$ in
 \laf exposes a \derefnr redex:
{
\begin{align*}
\app{\lx\app{\ly\underline{\lz z}}{\val_y}}{\ap{\val_x}{\underline{\val_z\rule{0pt}{6pt}}}} \stackrel{\textit{lift}}{\step} 
\app{\lx\app{\ly\underline{\lz z}}{\ap{\val_y}{\underline{\val_z\rule{0pt}{6pt}}}}}{\val_x} \stackrel{\textit{lift}}{\step} 
\app{\lx\app{\ly\underline{\app{\lz z}{\val_z}}}{\val_y}}{\val_x} 
\end{align*}
}

The two calculi differ from each other in their timing of variable
 replacements. The \lmow calculus allows the replacement of a variable with its
 value anywhere in the body of its binding $\lambda$. The \laf calculus
 replaces a variable with its argument only if evaluation of the function body
 needs it, where ``need'' is formalized via so-called evaluation contexts
 ($\Eaf$). Thus evaluation contexts in \laf serve the double purpose of
 specifying demand for arguments and the standard reduction strategy. The term
 $\app{\lx\ly x}{\val}$ illustrates this difference between the two calculi.
 According to \lmow, the term is a \Vnr redex and reduces to
 $\app{\lx\ly\val}{\val}$, whereas in \laf, the term is irreducible because the
 $x$ occurs in an inner, unapplied $\lambda$, and is thus not ``needed.''

Also, \lmow is more lenient than \laf when it comes to
 re-associations. The \laf calculus re-associates the left or right
 hand side of an application only if it has been completely reduced to an answer,
 but \lmow permits re-association as soon as one nested function layer is
 revealed. In short, \lmow proves more equations than \laf, i.e.,
 $\pmb{\laf} \subset \pmb{\lmow}$.

In \laf, programs reduce to answers:
$\;$ 
$$\evalaf{\e} = \texttt{done} \textrm{ iff there exists an answer } \ansaf
     \textrm{ such that }\pmb{\laf}\vdash\e = \ansaf$$ 
 In contrast, Maraist et al. introduce a ``garbage collection'' axiom into 
 \lmow to avoid answers and to use values instead. This suggests the
 following definition: 
$\;$ 
$$\evalmow{\e} = \texttt{done} \textrm{ iff there exists a value } \valmow
     \textrm{ such that }\pmb{\lmow}\vdash\e =\valmow$$ 
 This turns out to be incorrect, however. Specifically, let $\evalnamename$ be
 the analogous call-by-name evaluator. Then $\evalafname = \evalnamename$ but
 $\evalmowname \neq \evalnamename$. Examples such as ${\app{\lx\ly x}{\Omega}}$ confirm
 the difference.

In recognition of this problem, Maraist et al.\ use Ariola and Felleisen's
 axioms and evaluation contexts to create their Curry-Feys-style standard
 reduction sequences. Doing so reveals the inconsistency of \lmow with 
 respect to Plotkin's \emph{correspondence criteria}~\cite{Plotkin1975LC}. 
 According to Plotkin, a useful calculus \emph{corresponds to} a
 programming language, meaning its axioms (1) satisfy the 
 Church-Rosser and Curry-Feys Standardization properties, and (2) define a 
 standard reduction function that is equal to
 the evaluation function of the programming language. Both the call-by-name and
 the call-by-value \lcs satisfy these criteria with respect to call-by-name and
 call-by-value SECD machines for ISWIM, respectively. So does \laf with respect
 to a call-by-need SECD machine, but some of \lmow's axioms
 cannot be used as standard reduction relations.

Finally, the inclusion of \Gnr is a brute-force attempt to address the
 function call retention problem. Because \Gnr may discard arguments even before
 the function is called, both sets of authors consider it too coarse and
 acknowledge that a tighter solution to the function call retention issue would
 ``strengthen the calculus and its utility for reasoning about the
 implementations of lazy languages''~\cite{Ariola1995Cbneed}.

\section{A New Call-by-need $\lambda$ Calculus}

Our new calculus, \lneed, uses a single axiom, \betaneednr. The new axiom evaluates the argument when it
 is first demanded, replaces all variable occurrences with that result, and
 then discards the argument and thus the function call. In addition, the
 axiom performs the required administrative scope adjustments as part of
 the same step, rendering explicit re-association axioms unnecessary. In short,
 every reduction step in our calculus represents computational progress.

Informally, to perform a reduction, three components must be identified:
\begin{enumerate}
  \item the next demanded variable,
  \item the function that binds that demanded variable,
  \item and the argument to that function.
\end{enumerate}
 In previous by-need calculi the re-association axioms rewrite a term so that
 the binding function and its argument are adjacent.
 
Without the re-association axioms, finding the function that binds the
 demanded variable and its argument requires a different kind of work. The
 following terms show how the demanded variable, its binding function, and its
 argument can appear at seemingly arbitrary locations in a program:
\begin{itemize}
\item $\app{\underline{\lx}\app{\ly\lz\underline{x\rule[-1pt]{0pt}{7pt}}}{\e_y}}{\ap{\underline{\e_x\rule{0pt}{6pt}}}{\e_z}}$
\item $\app{\lx\app{\underline{\ly}\lz\underline{y\rule{0pt}{6pt}}}{\underline{\e_y\rule{0pt}{6pt}}}}{\ap{\e_x}{\e_z}}$
\item $\app{\lx\app{\ly\underline{\lz}\underline{z\rule[-1pt]{0pt}{7pt}}}{\e_y}}{\ap{\e_x}{\underline{\e_z\rule{0pt}{6pt}}}}$
\end{itemize}
 Our \betaneednr axiom employs a grammar of contexts to describe the path
 from a demanded variable to its binding function and from there to its
 argument.

The first subsection explains the syntax and the contexts of \lneed in a
 gradual fashion. The second subsection presents the \betaneednr axiom and also
 shows how to derive it from Ariola and Felleisen's \laf calculus.

\subsection{Contexts}
\label{subsec:newevalcontexts}

Like the existing by-need calculi, the syntax of our calculus is
 that of Church's original calculus. In \lneed, calculations evaluate terms $e$
 to answers $\Av$, which generalize answers from Ariola and Felleisen's
 calculus:
\begin{align*}
  \tag{Terms}
    \e &= x \mid \lxe \mid \ap{\e}{\e} \\
  \tag{Values}
    \val &= \lxe \\
  \tag{Answers}
    \ans &= \Av \\
  \tag{Answer Contexts}
    \A &= \hole \mid \ap{\inA{\lx{\A}}}{\e}
\end{align*}

Following Ariola and Felleisen, the basic axiom uses evaluation contexts to
 specify the notion of demand for variables:
\begin{align*}
   \tag{Evaluation Contexts}
   E = \hole \mid \ap{\E}{\e} \mid \ldots
\end{align*}
 The first two kinds, taken from \laf, specify that a variable is
 demanded, and that a variable in the operator position of an application
 is demanded, respectively.  

Since the calculus is to model program evaluation, we are primarily interested
 in demanded variables under a $\lambda$-abstraction. This kind of evaluation
 context is defined using an answer context $\A$:
 \begin{align*}
   \tag{Another Evaluation Context}
   E = \ldots \mid \inA{\E} \mid \ldots
 \end{align*}
 Using answer contexts, this third evaluation context dictates that demand
 exists under a $\lambda$ if a corresponding argument exists for that
 $\lambda$. Note how \textit{an answer context descends under the same number
   of $\lambda$s as arguments for those $\lambda$s.} In particular, for any term
 $\ap{\inA{\lxe_1}}{\e_2}$, $\e_2$ is always the argument of $\lxe_1$. The
 third evaluation context thus generalizes the function-is-next-to-argument
 requirement found in both call-by-name and call-by-value. The generalization
 is needed due to the retention of function calls in \lneed.

 Here are some example answer contexts that might be used: 
\begin{align*}
\\[-20pt]
  \A_0 &= \underbrace{\app{\lx\hole}{\e_x}} \\
  \A_1 &= \underbrace{\app{\lx\underbrace{\app{\ly\hole}{\e_y}}}{\e_x}} \\
  \A_2 &= \underbrace{\app{\lx\underbrace{\app{\ly\underbrace{\app{\lz\hole}{\e_z}}}{\e_y}}}{\e_x}}
\\[-20pt]
\end{align*}
 An underbrace matches each function to its argument. The examples
 all juxtapose functions and their arguments. In contrast, the next two
 separate functions from their arguments:
\begin{align*}
\\[-20pt]
 \A_3 &= (\underbrace{\lx\ly\lz\hole)\,\e_x}\,\e_y\,\e_z\\[-13pt]
 \phantom{\A_3}&\phantom{=\;(\lx}\underbrace{\phantom{\ly\lz\hole)\,\e_x\,\e_y}}
   \\[-13pt]
 \phantom{\A_3}
   &\phantom{=\;(\lx\ly}\underbrace{\phantom{\lz\hole)\,\e_x\,\e_y\,\e_z}}\\
 \A_4 &= (\lx(\underbrace{\ly\lz\hole)\,\e_y})\,\e_x\,\e_z \\[-13pt]
 \phantom{\A_4}&\phantom{=\;(}\underbrace{\phantom{\lx(\ly\lz z)\,\e_y)\,\e_x}}
   \\[-13pt]
 \phantom{\A_4}
   &\phantom{=\;(\lx(\ly}\underbrace{\phantom{\lz z)\,\e_y)\,\e_x\,\e_z}} 
\\[-20pt]
\end{align*}


To summarize thus far, when a demanded variable is discovered under a
 $\lambda$, the surrounding context looks like this:
\begin{align*}
  \inA{\inE{x}}
\end{align*}
 where both the function binding $x$ and its argument are in $\A$. The
 decomposition of the surrounding context into $\A$ and $\E$ assumes that $\A$
 encompasses as many function-argument pairs as possible; in other words, it is
 impossible to merge the outer part of $\E$ with $\A$ to form a larger answer
 context.

To know which argument corresponds to the demanded variable, we must 
 find the $\lambda$ that binds $x$ in $\A$. To this end, we 
 split answer contexts so that we can ``highlight'' a function-argument
 pair within the context:
 \begin{align*}
   \tag{Partial Answer Contexts--Outer}
     \Ap & = \hole \mid \ap{\inA{\Ap}}{\e} \\
   \tag{Partial Answer Contexts--Inner}
    \Am & = \hole \mid \inA{\lx\Am}
 \end{align*}
 Using these additional contexts, any answer context can be decomposed into
\begin{align*}
 \inAp{\ap{\inA{\lx\inAm{\,\,\,}}}{\e}}
\end{align*}
 where $\e$ is the argument of ${\lx\inAm{\,\,\,}}$.  For a fixed
 function-argument pair in an answer context, this partitioning into $\Ap$,
 $\A$, and $\Am$ is unique. The $\Ap$ subcontext represents the part of the
 answer context around the chosen function-argument pair; the $\Am$ subcontext
 represents the part of the answer context in its body; and $\A$ here is the
 subcontext between the function and its argument. Naturally we must demand
 that $\Ap$ composed with $\Am$ is an answer context as well so that the
 overall context remains an answer context. The following table lists the
 various subcontexts for the example $\A_4$ for various function-argument
 pairs:

$$
 \begin{array}{r|c|c|c}
   & \multicolumn{3}{c}{ \A_4 = \app{\lx\app{\ly\lz\hole}{\e_y}}{\ap{\e_x}{\e_z}} }                              {\rule[-6pt]{0pt}{13pt}} \\ \hline
   \Ap= {}   & \ap{\hole}{\e_z}               & \app{\lx\hole}{\ap{\e_x}{\e_z}} & \hole                          {\rule{0pt}{13pt}}\\
   \A= {}    & \hole                          & \hole                           & \app{\lx\app{\ly\hole}{\e_y}}{\e_x} \\ 
   \Am= {}   & \app{\ly\lz\hole}{\e_y}        & \lz\hole                        & \hole                          {\rule[-6pt]{0pt}{13pt}} \\ \hline
   \A_4= {}  & \inAp{\ap{\inA{\lx\Am}}{\e_x}} & \inAp{\ap{\inA{\ly\Am}}{\e_y}}  & \inAp{\ap{\inA{\lz\Am}}{\e_z}} {\rule{0pt}{13pt}}
\end{array}$$

Now we can define the fourth kind of evaluation context:
\begin{align*}
\tag{Final Eval. Context}
  \E &=  \ldots \;\mid\; \inAp{\ap{\inA{\lx\inAm{\inE{x}}}}{\E}},\;\;
       \textrm{ where } \inAp{\Am} \in \A 
\end{align*}
 This final evaluation context shows how demand shifts to an
 argument when a function parameter is in demand within the function body.

\subsection{The \betaneednr Axiom and a Derivation}
 
Figure~\ref{fig:lneed} summarizes the syntax of \lneed as developed in the
 preceding section.\footnote{We gratefully acknowledge Casey Klein's help
   with the $\A$ production.} In this section we use these
 definitions to formulate the $\beta$ axiom for our calculus.

\begin{figure}[htbp]
\vspace{-22pt}
\begin{align*}
  \tag{Terms}
    \e & = x \mid \lxe \mid \ap{\e}{\e} \\
  \tag{Values}
    \val &= \lxe \\
  \tag{Answers}
    \ans &= \Av \\
  \tag{Answer Contexts}
    \A & = \hole 
      \mid \ap{\inA{\lx{\A}}}{\e}
 \\
  \tag{Partial Answer Contexts--Outer}
    \Ap & = \hole
       \mid \ap{\inA{\Ap}}{\e} 
 \\
  \tag{Partial Answer Contexts--Inner}
    \Am & = \hole
        \mid \inA{\lx\Am} \\
  \tag{Evaluation Contexts} 
    \E & =  \hole 
       \mid \ap{\E}{\e} 
       \mid \inA{\E} 
       \mid \inAp{\ap{\inA{\lx\inAm{\inE{x}}}}{\E}},\\
       &\hspace{3cm}\textrm{ where } \inAp{\Am} \in \A 
\end{align*}
\vspace{-10pt}
  \caption{The syntax and contexts of the new call-by-need \lc, \lneed.} \label{fig:lneed}
\vspace{-10pt}
\end{figure}

Here is the single axiom of \lneed:
\begin{align*}
\\[-20pt]
  \tag{\betaneednr} 
  \inAp{\ap{\inA[_1]{\lx\inAm{\Ex}}}{\Av[_2]}} &= 
  \inAp{\inA[_1]{\inA[_2]{\substx{\inAm{\Ex}}{\val}}}},\\[-3pt]
   &\textrm{where }\inAp{\Am} \in \A
\\[-20pt]
\end{align*}
 A \betaneednr redex determines which parameter $x$ of some function 
 is ``in demand'' and how to locate the corresponding argument $\Av[_2]$,
 which might be an answer not necessarily a value. 
 The contexts from the previous section specify the path from the binding
 position ($\lambda$) to the variable occurrence and the argument. A
 \betaneednr reduction substitutes the value in $\Av[_2]$ for 
 all free occurrences of the function parameter---just like in other
 $\lambda$ calculi. In the process, the function call is discarded. Since the
 argument has been reduced to a value, there is no duplication
 of work, meaning our calculus satisfies the
 requirements of lazy evaluation. Lifting $\A_2$ to the top of the
 evaluation context ensures that its bindings remain intact and visible for $v$.

\def\needstep{\longrightarrow}

Here is a sample reduction in \lneed, where $\longrightarrow$ is the one-step reduction:
\def\innerdemand#1{\underline{#1}}
\begin{eqnarray}
 &           &  \app{\app{\lx\app{\ly\underline{\lz \mathbf{z}\;y\;x}}{\ly y}}{\lx x}}{\underline{\lz z}} \\
 & \needstep & \app{\lx\boxedapp{\ly \app{\lz \innerdemand{z}}{\mathbf{y}\;x}}{\ly y}}{\lx x} \\
 & \needstep & \app{\lx\app{\boxedapp{\lz \mathbf{z}}{\ly y}}{x}}{\lx x}  \\
 & \needstep & \boxedapp{\lx\app{\ly \innerdemand{y}}{\mathbf{x}}}{\lx x} 
\end{eqnarray}
The ``in demand'' variable is in bold; its binding $\lambda$ and argument are
underlined. Line~1 is an example of a reduction that involves a non-adjoined
function and argument pair. In line~2, the demand for the value of
{\it z\/} (twice underlined) triggers a demand for the value of
{\it y\/}; line~4 contains a similar demand chain.

\begin{figure}[htbp]
\vspace{-10pt}
%
\begin{minipage}[b]{0.99\linewidth}
  \begin{align*}
    \tag{Values}
      \val &= \lxe \\
    \tag{Answers}
      \ansaf &= \Aafv \\
    \tag{Answer Contexts}
      \Aaf   &= \hole \mid \app{\lx\Aaf}{\e} \\
    \tag{Evaluation Contexts}
      \Eaf &= \hole \mid \ap{\Eaf}{\e} \mid \inAaf{\Eaf}
                    \mid \app{\lxEafx}{\Eaf}
  \end{align*}
  \begin{align*}
    \tag{\betaneednrr} \app{\lxEafx}{\val}                & =   \substx{\Eafx}{\val} \\
    \tag{\liftnr[']}   \app{\lx\Aafv}{\ap{\e_1}{\e_2}}    & =   \app{\lx\inAaf{\ap{\val}{\e_2}}}{\e_1} \\
    \tag{\assocnr[']}  \appp{\lxEafx}{\app{\ly\Aafv}{\e}} & =   \app{\ly\inAaf{\app{\lxEafx}{\val}}}{\e}
  \end{align*}
\end{minipage}

 \caption{A modified calculus, \lafmod.} \label{fig:lafmod}
\vspace{-5pt}
\end{figure}

To furnish additional intuition into \betaneednr, we use the rest
 of the section to derive it from the axioms of \laf. The \lafmod
 calculus in figure~\ref{fig:lafmod} combines \laf with
 two insights. First, Garcia et al.~\cite{Garcia2009Delimited} observed that when the answers
 in \laf's \liftnr and \assocnr redexes are nested deeply, multiple
 re-associations are performed consecutively. Thus we modify \liftnr and
 \assocnr to perform all these re-associations in one step.\footnote{The same
   modifications cannot be applied to \Cnr and \Anr in \lmow because they allow
   earlier re-association and thus not all the re-associations are performed
   consecutively.} The modified calculus defines answers via answer contexts,
 $\Aaf$, and the modified \liftnr['] and \assocnr['] axioms utilize these
 answer contexts to do the multi-step re-associations. Thus programs in this
 modified calculus reduce to answers $\Aafv$. Also, the $\Aaf$ answer contexts
 are identical to the third kind of evaluation context in \lafmod and the
 new definition of $\Eaf$ reflects this relationship.

Second, Maraist et al.~\cite{Maraist1995Linear} observed that once an argument
 is reduced to a value, all substitutions can be performed at once. The
 \betaneednrr axiom exploits this idea and performs a full
 substitution. Obviously \betaneednrr occasionally performs more substitutions
 than \derefnr. Nevertheless, any term with an answer in \laf likewise has an
 answer when reducing with \betaneednrr.

Next an inspection of the axioms shows that the contractum of a \assocnr['] redex
 contains a \betaneednrr redex. Thus the
 \assocnr['] re-associations and \betaneednrr substitutions can be performed
 with one merged axiom:\footnote{Danvy et al.~\cite{Danvy2010Cbneed}
 dub a \betaneednrr[''] redex a ``potential redex'' in unrelated work.}
\begin{align*}
   \tag{\betaneednrr['']}
   \app{\lxEafx}{\Aafv} &= \inAaf{\substx{\Eafx}{\val}}
\end{align*}

The final step is to merge \liftnr['] with \betaneednrr[''], which requires our
 generalized answer and evaluation contexts. A na{\"\i}ve attempt may look like this:
\begin{align*}
  \tag{\betaneednrr[''']}
    \ap{\inA[_1]{\lxEx}}{\Av[_2]} &= \inA[_1]{\inA[_2]{\substx{\Ex}{\val}}}
\end{align*}
 As the examples in the preceding subsection show, however, the binding
 occurrence for the ``in demand'' parameter $x$ may not be the inner-most
 binding $\lambda$ once the re-association axioms are eliminated. That is, in
 comparison with \betaneednr, \betaneednrr['''] incorrectly assumes $E$ is
 always next to the binder. We solve this final problem with the
 introduction of partial answer contexts.

\section{Consistency, Determinism, and Soundness} \label{sec:essential}

If a calculus is to model a programming language, it must satisfy some
 essential properties, most importantly a Church-Rosser theorem and a
 Curry-Feys standardization theorem~\cite{Plotkin1975LC}. The former guarantees
 consistency of evaluation; that is, we can define an evaluator {\em function\/} with
 the calculus. The latter implies that the calculus comes with a
 {\em deterministic\/} evaluation strategy. Jointly these properties imply
 the calculus is {\em sound\/} with respect to observational equivalence.

\subsection{Consistency: Church-Rosser}

The \lneed calculus defines an evaluator for a by-need language:
$$
\evalneed{\e} = \texttt{done} \textrm{ iff there exists an answer } \ans
\textrm{ such that } \pmb{\lneed} \vdash \e =\ans
$$
 To prove that the evaluator is indeed a (partial) function, we prove that
 the notion of reduction satisfies the Church-Rosser property. 
\begin{theorem}
  $\evalneedname$ is a partial function.
\end{theorem}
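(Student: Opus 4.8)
The plan is to obtain this statement as a corollary of a Church--Rosser theorem for $\lneed$. Write $\to$ for the one-step reduction, i.e.\ the compatible closure of the single axiom $\betaneednr$, write $\to^{*}$ for its reflexive--transitive closure, and note that the provable equality $\pmb{\lneed}\vdash\e_1=\e_2$ is exactly the symmetric closure of $\to^{*}$. Once $\to$ is confluent, the Church--Rosser property follows by the usual tiling argument: a proof of $\e_1=\e_2$ is a finite zig-zag of reductions, and confluence collapses each peak, so $\e_1$ and $\e_2$ share a common reduct. From here the theorem is immediate. The evaluator can only ever return $\texttt{done}$, so single-valuedness is formal; what confluence secures is the \emph{consistency} of the definition---that the provable-equality characterization $\pmb{\lneed}\vdash\e=\ans$ agrees with the reduction-based notion of reaching an answer and does not collapse under the full equational theory. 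Thus the real work is confluence of $\to$.

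I would not try to inherit confluence from $\laf$. Although Section~3 motivates $\betaneednr$ as a merge of the \emph{lift}, \emph{assoc}, and \emph{deref} steps of $\laf$, our axiom additionally discards the function call, so the two equational theories are genuinely distinct and no simple simulation transports Ariola and Felleisen's result. Instead I would prove confluence directly by the Tait--Martin-L\"of method of parallel reduction. Define a relation $\Rightarrow$ that is reflexive on every term constructor and may contract several $\betaneednr$ redexes simultaneously, arranged so that $\to\;\subseteq\;\Rightarrow\;\subseteq\;\to^{*}$. The argument then rests on three lemmas: (i) a substitution lemma, that $\e\Rightarrow\e'$ and $\val\Rightarrow\val'$ imply $\e[x{:=}\val]\Rightarrow\e'[x{:=}\val']$, which is needed because $\betaneednr$ performs a full substitution; (ii) the diamond property for $\Rightarrow$; and (iii) the standard fact that the diamond property lifts to confluence of $\Rightarrow^{*}=\to^{*}$.

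The diamond property is where the difficulty concentrates, and it is harder here than for ordinary $\beta$ precisely because of the context grammar. A $\betaneednr$ redex is exposed only through a decomposition $\inAp{\ap{\inA{\lx\inAm{\Ex}}}{\ans}}$ subject to the side condition $\inAp{\Am}\in\A$ and to the maximality assumption that $\A$ captures as many function--argument pairs as possible. When two such redexes in one term are contracted in parallel, their contexts can interleave in several ways: one redex's argument may sit inside the other's answer context $\A_1$, the demand path $\E$ of one may thread through the binding layers of the other, or the two may share binders. For each overlap pattern I would exhibit the common $\Rightarrow$-reduct and, crucially, re-establish that the residual term still admits a legal decomposition---that the grammar invariants, the side condition $\inAp{\Am}\in\A$, and the uniqueness of the partition into $\Ap$, $\A$, and $\Am$ all survive the step. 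The genuinely novel complication, absent in the earlier calculi, is that $\betaneednr$ lifts the argument's answer context $\A_2$ to the top and discards the call; tracking where residual redexes migrate under this lifting and merging, and checking that demand is neither spuriously created nor destroyed when every occurrence of $x$ is replaced by $\val$, is the main obstacle. I would organize the case analysis by the relative positions of the two redexes' binding $\lambda$'s---disjoint, nested, and argument-embedded---and I expect the nested case beneath a lifted $\A_2$ to demand the most care.
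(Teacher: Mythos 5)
Your proposal is correct and takes essentially the same route as the paper: the paper derives the theorem directly from a Church--Rosser lemma, which it proves by exactly the Tait--Martin-L\"of method you describe---a parallel reduction relation $\pstep$ (overloaded to act on the context grammar $\A$, $\Ap$, $\Am$, $\E$ as well as on terms) whose diamond property is established by structural induction on the derivation. The difficulty you single out---re-establishing legal decompositions and the side condition $\inAp{\Am} \in \A$ after contracting overlapping redexes---is precisely what the paper's parallel reduction on contexts is designed to manage.
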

\begin{proof}
  The theorem is a direct consequence of lemma~\ref{lem:cr} (Church-Rosser).
\end{proof}

Our strategy is to define a parallel reduction relation for
 \lneed~\cite{Barendregt1985LC}. Define \step to be the compatible closure
 of a \betaneednr reduction, and \steps to be the reflexive, transitive
 closure of \step. Additionally, define \pstep to be the relation that
 reduces \betaneednr redexes in parallel. 

\begin{definition}[$\pstep$]
\begin{center}
$\begin{array}{rll}
  \e                                           & \pstep & \e \\
  \inAp{\ap{\inA[_1]{\lx\inAm{\Ex}}}{\Av[_2]}} & \pstep & \inAp[']{\inA[_1']{\inA[_2']{\substx{\inAm[']{\Ex[']}}{\val'}}}}, \\
  					       &        & \textrm{ if } \inAp{\Am} \in A,\, \inAp[']{\Am'} \in \A,\, \Ap \pstep \Ap', \A_1 \pstep \A_1',\\
					       &        & \hspace{5mm}\A_2 \pstep \A_2', \Am \pstep \Am',\, \E \pstep \E',\, \val \pstep \val'\\
 \ap{\e_1}{\e_2} 			       & \pstep & \ap{\e_1'}{\e_2'}, \textrm{ if } \e_1 \pstep \e_1',\; \e_2 \pstep \e_2' \\
  \lxe					       &\pstep & \lxe', \textrm{ if } \e \pstep \e'\\
\end{array}$
\end{center}
\end{definition}

The parallel reduction relation \pstep relies on notion of parallel reduction for 
 contexts; for simplicity, we overload the relation symbol to denote both
 relations. 

\begin{definition}[$\pstep$ for Contexts]
\begin{center}
$\begin{array}{rll}
  \hole & \pstep & \hole \\
  \ap{\inA[_1]{\lx\A_2}}{\e} & \pstep & \ap{\inA[_1']{\lx\A_2'}}{\e'} , \textrm{ if } \A_1 \pstep \A_1',\, \A_2 \pstep \A_2',\, \e \pstep \e' \\
  \ap{\inA{\Ap}}{\e}         & \pstep & \ap{\inA[']{\Ap'}}{\e'}, \textrm{ if } \A \pstep \A',\, \Ap \pstep \Ap',\, \e \pstep \e' \\
  \inA{\lx\Am}               & \pstep & \inA[']{\lx\Am'}, \textrm{ if } \A \pstep \A',\, \Am \pstep \Am' \\
  \ap{\E}{\e}                & \pstep & \ap{\E'}{\e'}, \textrm{ if } \E \pstep \E',\, \e \pstep \e' \\
  \inA{\E}                   & \pstep & \inA[']{\E'}, \textrm{ if } \A \pstep \A',\, \E \pstep \E' \\
  \inAp{\ap{\inA{\lx\inAm{\inE[_1]{x}}}}{\E_2}} 
  			     & \pstep & \inAp[']{\ap{\inA[']{\lx\inAm[']{\inE[_1']{x}}}}{\E_2'}}, \\
			     &        & \textrm{ if } \inAp{\Am}\in\A,\, \Ap \pstep \Ap',\, \A \pstep \A',\\
			     &        & \hspace{15pt} \Am \pstep \Am',\, \E_1 \pstep \E_1',\, \E_2 \pstep \E_2'
\end{array}$
\end{center}
\end{definition}

\begin{lemma}[Church-Rosser]
\label{lem:cr}
If $\e \steps \e_1$ and $\e \steps \e_2$, then there exists a term $\e'$ such
that $\e_1 \steps \e'$ and $\e_2 \steps \e'$.
\end{lemma}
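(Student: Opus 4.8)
The plan is to follow the Tait--Martin-L\"of strategy, which the setup around $\pstep$ already anticipates. First I would show that $\steps$ coincides with the reflexive--transitive closure of $\pstep$. This needs two inclusions. On one hand $\step \subseteq \pstep$: a single compatible-closure contraction of one $\betaneednr$ redex is the special case of $\pstep$ that fires that redex while leaving every subcontext and the argument fixed through the identity clauses ($\hole \pstep \hole$ and the congruence rules with identities), and a contraction buried inside a term is obtained from the congruence clauses of $\pstep$. On the other hand $\pstep \subseteq \steps$, proved by induction on the derivation of $\pstep$, contracting the simultaneously reduced redexes one at a time. Given these inclusions, Church-Rosser reduces to the \emph{diamond property} for $\pstep$: if $\e \pstep \e_1$ and $\e \pstep \e_2$, then there is an $\e'$ with $\e_1 \pstep \e'$ and $\e_2 \pstep \e'$.

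To obtain the diamond property I would use complete developments. For each term $\e$ I would define its full parallel reduct $\e^{*}$, namely the term obtained by simultaneously contracting every $\betaneednr$ redex present in $\e$, together with the analogous operation on each context category. The crux is then the \emph{triangle lemma}: whenever $\e \pstep \e'$ we also have $\e' \pstep \e^{*}$. The diamond follows at once, since from $\e \pstep \e_1$ and $\e \pstep \e_2$ the triangle lemma yields $\e_1 \pstep \e^{*}$ and $\e_2 \pstep \e^{*}$, so $\e' = \e^{*}$ works. Both the definition of $\e^{*}$ and the triangle lemma proceed by a simultaneous induction over terms and the several context categories, mirroring the two definitions of $\pstep$.

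The indispensable supporting result is a substitution lemma for parallel reduction: if $\e \pstep \e'$ and $\val \pstep \val'$, then $\substx{\e}{\val} \pstep \substx{\e'}{\val'}$, with the corresponding statement for contexts. This is precisely what lets the redex clause of $\pstep$ --- which rewrites to $\substx{\inAm{\Ex}}{\val}$ while reducing all of $\Ap$, $\A_1$, $\A_2$, $\Am$, $\E$, and $\val$ in parallel --- compose with subsequent reductions. Alongside it I would prove the routine preservation lemmas: that $\pstep$ sends values to values, answer contexts to answer contexts, and each flavor of partial answer context and evaluation context to one of the same kind, and that it preserves the side condition $\inAp{\Am} \in \A$. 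These ensure that $\e^{*}$ is well defined and that every residual of a redex is again a genuine redex with a valid decomposition.

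I expect the main obstacle to be the interaction between substitution and the context-sensitive redex pattern that is peculiar to call-by-need. Unlike ordinary $\beta$, a $\betaneednr$ redex is singled out through a nontrivial decomposition of its context into $\Ap$, $\A$, $\Am$, and $\E$ under a maximality constraint, and demand itself is a syntactic property carried by $\E$. Substituting a value for the demanded variable $x$ can manufacture fresh demanded-variable configurations and can change which function--argument pairs the maximal answer context absorbs, so the overlapping cases of the triangle lemma are delicate: I must verify that reductions inside $\A_2$ and inside the inner contexts never invalidate the decomposition relied upon by an enclosing redex, and that every redex demanded by the development $\e^{*}$ is either already tracked by the parallel step or is a legitimate residual created by the substitution. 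Closing the case for the evaluation-context clause, where these overlaps concentrate, is where the real work lies; the remaining cases are standard congruence bookkeeping.
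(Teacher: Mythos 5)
Your proposal is correct in outline and shares the paper's overall skeleton: both define the parallel reduction $\pstep$, observe that $\steps$ is the reflexive--transitive closure of $\pstep$ (via the two inclusions $\step \subseteq \pstep \subseteq \steps$), and derive Church-Rosser from a diamond property for $\pstep$. Where you genuinely diverge is in how that diamond property is established. The paper proves lemma~\ref{lem:pstepdiamond} directly, by structural induction on the derivation of $\e \pstep \e_1$, matching it against an arbitrary second derivation $\e \pstep \e_2$; you instead propose Takahashi's complete-development method, defining a full reduct $\e^{*}$ and proving the triangle lemma ($\e \pstep \e'$ implies $\e' \pstep \e^{*}$), from which the diamond follows immediately. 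The trade-off is real in this calculus: the triangle method localizes the case analysis to a single derivation versus a canonical target, avoiding the quadratic case explosion of comparing two arbitrary parallel steps, but it demands that $\e^{*}$ be well defined for terms \emph{and} for every context category ($\A$, $\Ap$, $\Am$, $\E$) under the maximality side condition $\inAp{\Am} \in \A$ --- machinery the paper's direct induction never has to construct. Your well-definedness burden is discharged essentially by the unique-decomposition property (proposition~\ref{prop:unique}), and you correctly identify both the substitution lemma and the context-preservation lemmas as the shared load-bearing ingredients, as well as the genuinely delicate point in either approach: that reductions inside $\A_2$, $\Am$, and $\E$ neither invalidate the enclosing redex's decomposition nor fail to produce residuals that are again legitimate redexes. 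Either route completes the proof; yours trades extra definitional apparatus for a slimmer inductive core.
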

\begin{proof}
  By lemma~\ref{lem:pstepdiamond}, \pstep satisfies a diamond property. Since
  \pstep extends \step, \steps is also the transitive-reflexive closure of
  \pstep, so \steps also satisfies a diamond property.
\end{proof}
\begin{lemma}[Diamond Property of \pstep]
\label{lem:pstepdiamond}
If $\e \pstep \e_1$ and $\e \pstep \e_2$, there exists $\e'$ such that $\e_1
\pstep \e'$ and $\e_2 \pstep \e'$.
\end{lemma}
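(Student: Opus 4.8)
The plan is to use the Tait--Martin-L\"of technique in Takahashi's streamlined form. Rather than joining $\e_1$ and $\e_2$ directly, I would define the \emph{complete development} $\e^{*}$ of a term---the result of contracting every \betaneednr\ redex of $\e$ at once---and prove the \emph{triangle property}: $\e \pstep \e_1$ implies $\e_1 \pstep \e^{*}$. The diamond property then follows immediately, since $\e \pstep \e_1$ and $\e \pstep \e_2$ yield $\e_1 \pstep \e^{*}$ and $\e_2 \pstep \e^{*}$, so $\e' = \e^{*}$ works. Because $\pstep$ is defined simultaneously on terms and on contexts, I would define $\e^{*}$ (together with a complete development for each class of context) by mutual recursion and prove the triangle property by simultaneous induction. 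For $\e^{*}$ to be well defined I must know that when $\e$ is a \betaneednr\ redex $\inAp{\ap{\inA[_1]{\lx\inAm{\Ex}}}{\Av[_2]}}$ with $\inAp{\Am}\in\A$, its decomposition into $\Ap,\A_1,\Am,\E,\A_2,\val$ is unique; I would establish this first, from the unique decomposition of evaluation contexts.

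Three supporting lemmas come next. The \textbf{plugging lemma} states that parallel reduction on contexts cooperates with parallel reduction on terms: if a context and the term in its hole both parallel-reduce, so does the filled term. This is a routine induction on the context-$\pstep$ derivation, with one case per rule. The \textbf{context-stability lemma} states that substituting a value for a variable into an answer, partial-answer, or evaluation context yields a context of the same class and preserves the side condition $\inAp{\Am}\in\A$; by Barendregt's hygiene convention the substituted variable is never captured by the context's binders, so substitution simply distributes through the grammar. The \textbf{substitution lemma} for $\pstep$ states that $\e \pstep \e'$ and $\val \pstep \val'$ imply $\substx{\e}{\val} \pstep \substx{\e'}{\val'}$; it is proved by induction on the first derivation. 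Values reduce only to values, so $\val'$ is again a value, and the one interesting case is when $\e$ is itself a \betaneednr\ redex, where context-stability lets the substitution commute with the decomposition and the plugging lemma reassembles the contractum.

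With these lemmas I would prove the triangle property by induction on $\e \pstep \e_1$. Reflexivity and the two congruence cases follow from the induction hypotheses and the plugging lemma. In the \betaneednr\ case $\e = \inAp{\ap{\inA[_1]{\lx\inAm{\Ex}}}{\Av[_2]}}$ and $\e_1 = \inAp[']{\inA[_1']{\inA[_2']{\substx{\inAm[']{\Ex[']}}{\val'}}}}$. Since $\e$ is a redex, its development $\e^{*}$ contracts the \emph{same} top-level redex and therefore has exactly the shape of $\e_1$, but with each of $\Ap,\A_1,\Am,\E,\A_2$ and $\val$ replaced by its own complete development. The induction hypotheses give that each primed piece parallel-reduces to the corresponding development, and then the substitution and plugging lemmas assemble these facts into $\e_1 \pstep \e^{*}$.

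The main obstacle is the bookkeeping for the \betaneednr\ redex, concentrated in two spots. First, well-definedness of $\e^{*}$ rests on unique decomposition: I must rule out a term's being a \betaneednr\ redex in two essentially different ways---a different demanded-variable path $\E$, or a different split $\Ap,\A_1,\Am$---which I expect to follow from unique decomposition of evaluation contexts together with the maximality forced by $\inAp{\Am}\in\A$. Second, the redex structure, including the lifting of $\A_2$ from argument position to the top of the answer context, must be shown stable both under substitution (for the substitution lemma) and under componentwise parallel reduction (for the triangle case); verifying that these operations preserve the partition and its side condition is the delicate part. If the unique-decomposition argument turns out to be awkward, I would instead prove the diamond property directly, by simultaneous induction on the two $\pstep$ derivations, where the only substantive cases are the \betaneednr/\betaneednr\ and \betaneednr/congruence overlaps, joined by the very same substitution and plugging lemmas.
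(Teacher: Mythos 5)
Your proposal is sound in outline but takes a genuinely different route from the paper. The paper proves the diamond property directly, by structural induction on the derivation of $\e \pstep \e_1$, joining against the possible derivations of $\e \pstep \e_2$ case by case; it never introduces complete developments. Your Takahashi-style argument factors the diamond through the triangle property, which buys you a canonical joining term $\e^{*}$ and replaces the case analysis over \emph{pairs} of derivations by a single induction. The price is exactly what you flag: $\e^{*}$ must be well defined, which forces you to prove unique decomposition of \betaneednr\ redexes up front---and note that the paper's unique-decomposition proposition is stated only for closed terms, whereas the parallel reduction (and hence your development) must handle the open terms that occur under binders, so you need a slightly stronger statement than the paper provides. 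You also need that the redex structure---the side condition $\inAp{\Am} \in \A$ and answer-hood of the argument---is stable under componentwise parallel reduction, so that the overlap case (top-level redex contracted in one step, only congruence applied in the other) closes; this is the same stability fact the paper's direct induction needs in its corresponding overlap cases, so neither route escapes the delicate bookkeeping. In short, your approach packages the work more modularly at the cost of an extra well-definedness obligation, while the paper's avoids defining $\e^{*}$ altogether; your stated fallback of simultaneous induction on the two derivations is essentially the paper's proof.
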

\newcommand{\pstepdef}[1]{\textrm{definition } 1.#1}
\newcommand{\diamondcase}[2][1]
           {\case{\e \pstep \e_{#1} \textrm{ by } \pstepdef{#2}}}
\newcommand{\IH}[2][']{\ensuremath{#2 \!\pstep\! #2#1}}
\begin{proof}
  The proof proceeds by structural induction on the derivation of $\e \pstep \e_1$.
\end{proof}
\subsection{Deterministic Behavior: Standard Reduction}

A language calculus should also come with a deterministic algorithm for applying the
 reductions to evaluate a program. Here is our {\em standard reduction\/}:
$$\inE{\e} \srstep \inE{\e'}, \textrm{ where } \e \;\betaneednr\; \e'$$

Our standard reduction strategy picks exactly one redex in a term.

\begin{proposition}[Unique Decomposition]
\label{prop:unique}
  For all closed terms $\e$, $\e$ either is an answer or $\e = E[e']$ for a
  unique evaluation context $\E$ and \betaneednr redex $\e'$.  
\end{proposition}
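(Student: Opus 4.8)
The plan is to prove existence and uniqueness of the decomposition by induction, after first \emph{relativizing} the statement to open terms so that the induction survives passing under a $\lambda$. The obstruction to a naive structural induction is that the third and fourth evaluation contexts move the hole beneath a binder, where the subterm is no longer closed; its free variables are precisely those bound by the surrounding answer context. Accordingly I would prove the following generalization by induction on the size of the plugged subterm: \emph{for every answer context $\A$ and term $\e$ with $\inA{\e}$ closed, $\inA{\e}$ is either an answer or equals $\inE{\e'}$ for a unique evaluation context $\E$ and unique $\betaneednr$ redex $\e'$, and the two alternatives are mutually exclusive.} The proposition is the instance $\A=\hole$. Two facts feed the whole argument: answer contexts are closed under composition, so a just-resolved call can be absorbed by noting that $\ap{\inA{\lx\hole}}{\e}$ is again an answer context; and, as a consequence, $\inA{\e}$ is an answer if and only if $\e$ is. The latter lets me answer ``is it an answer?'' across binders by looking only at the plug.

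For existence I would descend along the spine, accumulating $\E$ from the outside in. A $\lambda$-abstraction is a value, so $\inA{\e}$ is the answer $\inA{\val}$. For an application $\ap{\e_1}{\e_2}$ I consult the operator: if $\e_1$ is not an answer the hole stays in operator position, contributing an $\ap{\E}{\e}$ frame; if $\e_1$ is an answer $\inA[_0]{\val_0}$ with $\val_0=\lx\e_0$, I absorb the call, rewriting $\inA{\ap{\e_1}{\e_2}}$ as $\inA[']{\e_0}$ for the strictly larger answer context $\A'=\inA{\ap{\inA[_0]{\lx\hole}}{\e_2}}$, and recurse on the strictly smaller body $\e_0$. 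The descent can only terminate at a variable $x$, which closedness forces to be bound inside the accumulated answer context. Using the unique partition I write that context as $\inAp{\ap{\inA[_1]{\lx\inAm{\,\,\,}}}{\e}}$ for the binder of $x$, exposing its argument $\e$. If $\e$ is an answer $\Av[_2]$ the surrounded subterm is exactly a $\betaneednr$ redex $\inAp{\ap{\inA[_1]{\lx\inAm{\Ex}}}{\Av[_2]}}$ and $\E$ is what was accumulated outside it; otherwise demand shifts into $\e$, the frame $\inAp{\ap{\inA[_1]{\lx\inAm{\inE{x}}}}{\E}}$ is produced, and I recurse into $\e$ under the answer context of the binders scoping over it. Each recursive call is on a proper subterm, so the process halts.

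For uniqueness I would show the four productions for $\E$ are pairwise incompatible on a fixed term and each is internally forced. The discriminants are structural: a value admits no redex; an application whose operator is not an answer can only be split by $\ap{\E}{\e}$; and an application whose operator is an answer must first absorb a \emph{maximal} answer context. Maximality is exactly what separates $\inA{\E}$ from the fourth context: we keep absorbing function--argument pairs until the hole reaches either a variable bound in $\A$ (only the fourth context applies) or an application with a non-answer operator (only $\inA{\E}$, with an operator frame inside, applies). The residual choices are then forced---the binding $\lambda$ is unique by Barendregt hygiene, and the split of $\A$ into $\Ap$, $\A$, $\Am$ around that binder is unique as established earlier---and the choice inside the argument is unique by the induction hypothesis. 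Mutual exclusivity of the two alternatives follows because an answer seats a value at the foot of its spine, leaving no demanded variable and hence no $\betaneednr$ redex.

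The step I expect to fight hardest is the free-variable bookkeeping that makes the relativized induction legitimate. When demand shifts into an argument $\e$ buried deep in $\A$, that argument is open, and I must exhibit the exact answer context capturing its free variables---the binders lying above its position, each of which does carry a matching argument precisely because $\A$ is an answer context---and confirm that the recursive call is on a smaller plug so the measure decreases. The same care is needed to \emph{lift} a decomposition found under the surrounding answer context back through the intervening $\ap{\hole}{\e}$ and answer-context frames into a genuine decomposition of the whole term, so that the inner evaluation context composes with $\A$ into a legal outer one. Finally, turning the informal ``maximal answer context'' into a precise, demonstrably unique object---and reconciling it with the side condition $\inAp{\Am}\in\A$---is the delicate core of uniqueness; once it is pinned down, the remaining cases are routine.
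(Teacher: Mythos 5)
Your overall architecture is sound---relativizing the statement so it survives crossing binders, using closure of answer contexts under composition, locating the binder of a demanded variable via the unique $\Ap$/$\A_1$/$\Am$ partition, and deriving uniqueness from maximality plus hygiene are the right ingredients---but your induction measure is broken, and it breaks in precisely the case you call the hardest. When the descent reaches a variable $x$ whose binder's argument is not an answer, you recurse on that argument. The argument is not a subterm of the old plug $x$: it is carved out of the surrounding answer context $\A$, and it is in general much larger than $x$. So ``induction on the size of the plugged subterm'' does not decrease, the claim that ``each recursive call is on a proper subterm'' is false, and the confirmation you promise (``the recursive call is on a smaller plug'') cannot be carried out. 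The paper's own sample reduction exhibits the failure: with $\A = \app{\lx\app{\ly\app{\lz\hole}{\ap{y}{x}}}{\ly y}}{\lx x}$ and plug $z$, demand shifts from the plug $z$ (size $1$) to the argument $\ap{y}{x}$ (size $3$), now taken under the answer context $\app{\lx\app{\ly\hole}{\ly y}}{\lx x}$. The repair is a lexicographic measure: first the size of the whole term $\inA{\e}$, then the size of the plug. Shifting demand into an argument strictly shrinks the whole term, since the new answer context is assembled from the matched pairs of $\Ap$ alone while the binder of $x$, the context around $x$, and $\A_1$ are discarded; absorbing an answer operator leaves the whole term unchanged but strictly shrinks the plug; descending into an operator shrinks the whole term. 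No single measure you state covers all three kinds of recursive call; the lexicographic pair does.

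A second, structural gap: your relativized hypothesis quantifies only over pairs of an answer context and a plug, yet two of your recursive calls leave that regime, and the lifting you defer to is not mere composition of contexts. To handle $\inA{\ap{\e_1}{\e_2}}$ with $\e_1$ not an answer, you apply the hypothesis to $\e_1$ and must lift a decomposition of $\inA{\e_1}$ to one of $\inA{\ap{\e_1}{\e_2}}$; but when the variable demanded inside $\e_1$ is bound in $\A$, the frame $\ap{\hole}{\e_2}$ lands \emph{inside} the redex---it becomes part of the $\E$ in $\inAm{\Ex}$---so the redex itself changes while the outer evaluation context does not. For the same reason, when your descent finally reaches a variable, the accumulated context is an evaluation context, not an answer context (it contains the operator frames you pushed), so the unique-partition fact, stated for answer contexts, does not apply to it verbatim. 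Both problems dissolve if you instead prove, for arbitrary possibly-open terms, the three-way statement that $\e$ is an answer, or $\e = \inE{\e'}$ for a unique evaluation context $\E$ and \betaneednr redex $\e'$, or $\e = \inE{y}$ for a unique evaluation context $\E$ and demanded free variable $y$: a plain structural induction on $\e$ (the induction the paper's proof itself invokes) then goes through, every recursive call is on a proper subterm, each lift is a case split on which disjunct the subterm satisfies, and closedness rules out the third disjunct at the top level.
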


\begin{proof}
 The proof proceeds by structural induction on $\e$.
\end{proof}

Since our calculus satisfies the unique decomposition property, we can use the 
 standard reduction relation to define a (partial) evaluator function:
$$\evalneedsr{\e} = \texttt{done} \textrm{ iff there exists an answer } 
		   \ans\textrm{ such that }\e\srsteps\ans$$
where $\srsteps$ is the reflexive, transitive closure of $\srstep$.
Proposition \ref{prop:unique} shows \evalneedsrname is a function. The
following theorem confirms that it equals \evalneedname. 
\begin{theorem}
  \label{thm:evalneed=evalneedsr}
  $\evalneedname = \evalneedsrname$
\end{theorem}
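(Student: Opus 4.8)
The plan is to prove the theorem as a chain of equivalences that reduces it to the two facts this section is organized around: Church--Rosser (Lemma~\ref{lem:cr}), which is already available, and a standardization lemma, which is not. Unfolding the two definitions, $\evalneed{\e} = \texttt{done}$ means $\pmb{\lneed} \vdash \e = \ans$ for some answer $\ans$, whereas $\evalneedsr{\e} = \texttt{done}$ means $\e \srsteps \ans$ for some answer $\ans$. I would show that each of these predicates is equivalent to a single intermediate statement, namely that $\e \steps \ans$ for some answer $\ans$, and conclude $\evalneedname = \evalneedsrname$ by transitivity.

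Two of the four implications are immediate. Every standard step $\inE{\e} \srstep \inE{\e'}$ contracts a \betaneednr redex inside an evaluation context and is therefore a $\step$ step, so $\srsteps$ is contained in $\steps$; this gives ``$\e \srsteps \ans$ implies $\e \steps \ans$'' for free. In the other direction, $\steps$ is contained in the provable equality, so ``$\e \steps \ans$'' immediately yields $\pmb{\lneed}\vdash \e = \ans$, i.e. $\evalneed{\e} = \texttt{done}$. Thus only the two remaining implications carry content.

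For the implication $\evalneed{\e} = \texttt{done} \Rightarrow \e \steps \ans$ I would invoke Church--Rosser together with a small auxiliary lemma stating that answers are closed under reduction, i.e. if $\ans \step \e'$ then $\e'$ is again an answer; I expect to verify this by a routine case analysis on the position of the contracted redex relative to the $\A$ structure. Since provable equality is the equivalence closure of $\step$, Lemma~\ref{lem:cr} supplies a common reduct $\e'$ with $\e \steps \e'$ and $\ans \steps \e'$, and by answer-closure $\e'$ is an answer. Hence $\evalneed{\e} = \texttt{done}$ delivers the intermediate statement with $\ans' := \e'$.

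The last and hardest implication is standardization: if $\e \steps \ans$ for some answer, then $\e \srsteps \ans'$ for some answer. I would prove it by well-founded induction driven by Unique Decomposition (Proposition~\ref{prop:unique}): if $\e$ is already an answer we are done, and otherwise Proposition~\ref{prop:unique} produces the unique standard redex and a standard step $\e \srstep \hat{\e}$. The crux is a postponement (commutation) argument showing that the given reduction $\e \steps \ans$ can be rearranged so that this standard step comes first, leaving $\hat{\e}$ reducing to an answer along a derivation strictly smaller in a suitable measure, so that the induction hypothesis prepends the standard step. I expect the main obstacle to be the choice of this measure: the raw length of $\e \steps \ans$ need not decrease, since the mandated standard redex may differ from the redexes actually contracted in the given sequence. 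I therefore anticipate reusing the machinery already built for Lemma~\ref{lem:cr}, namely the parallel-reduction relation $\pstep$ and a developments-style count, to establish the commutation and guarantee that the rearrangement terminates.
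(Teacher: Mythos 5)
Your proposal is correct, and its core engine---Church--Rosser plus a standardization argument built on the parallel reduction $\pstep$ and a developments-style size measure---is exactly the engine the paper uses; the difference is in the packaging. The paper's own proof is a one-liner: it invokes lemma~\ref{lem:sr}, which converts an arbitrary reduction sequence into a Curry-Feys standard reduction sequence (the set $\srseqset$ defined just after the theorem), and observes that the front-end of such a sequence is a chain of $\srstep$ steps reaching an answer; lemma~\ref{lem:sr} in turn reduces to lemma~\ref{lem:pstepsr}, a prepend lemma (a single $\pstep$ step followed by a standard sequence yields a standard sequence) proved by triple lexicographic induction on the length of the standard sequence, the size of the parallel step, and the structure of the term. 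Your direct well-founded induction---proposition~\ref{prop:unique} supplies the standard redex, a postponement argument pulls it to the front, a measure forces termination---is this prepend lemma in disguise: when you make your ``suitable measure'' precise you should expect to arrive at the paper's lexicographic triple, and formulating the induction over standard reduction sequences rather than over single reductions-to-answer is the paper's way of keeping the induction hypothesis strong enough for the commutation cases. Where your version genuinely improves on the paper's two-sentence proof is in spelling out the glue the paper leaves tacit: (i) passing from $\pmb{\lneed} \vdash \e = \ans$ to $\e \steps \ans'$ needs Church-Rosser (lemma~\ref{lem:cr}) \emph{together with} closure of answers under $\step$, and (ii) that closure is a real, if unglamorous, lemma---it holds because the side condition $\inAp{\Am} \in \A$ prevents any variable from being in demand along an answer's spine, so every redex occurring in an answer lies inside an argument of the spine or inside the final value, and contracting it leaves the spine, hence answerhood, intact. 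Also to your credit, you correctly allow the standard reduction to end in a possibly different answer $\ans'$, which is essential since the given reduction may perform internal steps that standard reduction never mirrors.
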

\begin{proof}
 The theorem follows from lemma~\ref{lem:sr}, which shows how to obtain a
 standard reduction sequence for any arbitrary reduction sequence. The
 front-end of the former is a series of standard reduction steps. 
\end{proof}
\begin{definition}[Standard Reduction Sequences $\srseqset$] $\;$
\begin{itemize}
\item $x \subset \srseqset$
\item $\srseq{\lxe_1}{\lxe_m}$, if $\srseqe{m}$
\item $\e_0 \srseqop \srseqe{m}$, if $\e_0 \srstep \e_1$ and $\srseqe{m}$
\item $\srseqappend{(\ap{\e_1}{\e_1'})}{(\ap{\e_m}{\e_1'})} \srseq{(\ap{\e_m}{\e_2'})}{(\ap{\e_m}{\e_n'})}$,
  if ${\e_1}\srseqop\cdots\srseqop{\e_m}, \srseqe[']{n}$.
\end{itemize}
\end{definition}
\begin{lemma}[Curry-Feys Standardization] \label{lem:sr}
$\e \steps \e'$ iff there exists $\srseqe{n}$ such that $\e = \e_1$ and $\e' =
  \e_n$.
\end{lemma}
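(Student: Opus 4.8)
The plan is to prove the two implications separately, spending almost all of the effort on the forward direction. For the backward direction I would show, by induction on the definition of $\srseqset$, that every standard reduction sequence $\e_1, \ldots, \e_n$ is in particular an ordinary reduction sequence, so that $\e_1 \steps \e_n$. Each clause is immediate: the single-variable case is vacuous, the prepended-step case holds because $\srstep$ is a restriction of \step (an evaluation context is a context, so $\inE{\e} \step \inE{\e'}$ whenever $\e \;\betaneednr\; \e'$), and the $\lambda$-closure and application-concatenation clauses hold because \step is a compatible relation, hence closed under the term constructors. Thus consecutive terms are always related by \step and the endpoints by \steps.

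For the forward direction I would route through the parallel reduction \pstep introduced for lemma~\ref{lem:pstepdiamond}. As already noted in the proof of lemma~\ref{lem:cr}, \step is contained in \pstep and \pstep in \steps, so \steps is the reflexive, transitive closure of \pstep. Hence an arbitrary reduction $\e \steps \e'$ factors into a finite chain of parallel steps of some length $k$, and I would induct on $k$. The base case $k = 0$ amounts to showing that every single term is a one-element standard reduction sequence; this follows by a side induction on the structure of the term, using the variable clause for $x$, the $\lambda$-closure clause for $\lxe$, and the application-concatenation clause (with both constituent sequences of length one) for an application.

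The inductive step reduces to the lemma I expect to be the crux of the whole argument: if $\e \pstep \e_0$ and $\e_0, \e_1, \ldots, \e_n$ is a standard reduction sequence, then there is a standard reduction sequence from $\e$ to $\e_n$. In words, a single parallel step can always be \emph{absorbed} into the front of a standard reduction sequence. Granting this lemma, the inductive step is immediate: the induction hypothesis supplies a standard reduction sequence from the second term of the \pstep-chain to $\e'$, and the lemma prepends the first parallel step to it.

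I expect this absorption lemma to be the main obstacle. The proof is a nested induction --- on the derivation of $\e \pstep \e_0$, with an inner induction on the length of the standard sequence --- in which the non-$\betaneednr$ (purely structural) parallel steps are pushed past the sequence using its $\lambda$- and application-closure clauses, while a genuine $\betaneednr$ contraction must be shown to be \emph{standard-first}: either it already is the redex selected by the leading evaluation context of the sequence, or the subsequent standard steps are independent of it and may be permuted after it. Two features of \lneed make this delicate. First, because \pstep also acts on the context grammar, the case analysis must track simultaneous reductions in the answer and evaluation contexts surrounding the redex; the companion parallel-reduction relation on contexts is precisely what lets these cases close. Second, since a \betaneednr redex is non-local --- binding $\lambda$, demanded variable, and argument are separated by answer contexts --- I would invoke unique decomposition (proposition~\ref{prop:unique}) to guarantee that the leftmost redex, and hence the meaning of ``standard-first,'' is unambiguous. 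Finally, the case where the absorbed step is itself a $\betaneednr$ contraction relies on a substitution lemma for parallel reduction (if $\e \pstep \e'$ and $\val \pstep \val'$ then $\substx{\e}{\val} \pstep \substx{\e'}{\val'}$), the same auxiliary fact needed for lemma~\ref{lem:pstepdiamond}, to account for residual reductions inside the substituted copies of the value.
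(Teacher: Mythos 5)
Your high-level architecture coincides with the paper's: the backward direction is the routine check that standard reduction sequences are ordinary reduction sequences, and the forward direction factors $\e \steps \e'$ into a chain of $\pstep$ steps (legitimate because $\pstep$ sits between $\step$ and $\steps$) and absorbs them one at a time into a standard reduction sequence. Your ``absorption lemma'' \emph{is} the paper's lemma~\ref{lem:pstepsr}, and your prepending argument is exactly how the paper derives lemma~\ref{lem:sr} from it.

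The genuine gap is the induction measure you propose for the absorption lemma. You suggest structural induction on the derivation of $\e \pstep \e_0$, with an inner induction on the length of the standard sequence; this collapses in precisely the case you flag as the crux. When the absorbed parallel step contracts a \betaneednr redex, the natural move is to take that contraction (without its internal parallel reductions) as the leading standard step, leaving a residual parallel step, manufactured by your substitution lemma, of the shape $\substx{\inAm{\Ex}}{\val} \pstep \substx{\inAm[']{\Ex[']}}{\val'}$ inside the reduced contexts, which must then be absorbed into a standard sequence of the \emph{same} length. That residual is not a subderivation of the original: the derivation of $\val \pstep \val'$ is replicated once per free occurrence of $x$, so it is in general strictly larger, and the sequence is no shorter---neither component of your measure decreases, so your induction hypothesis is inapplicable. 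Moreover, the proof also needs calls where the sequence \emph{does} get shorter while the parallel step grows (when a parallel step is commuted past the first standard step), which forces sequence length to be the dominant component of the measure---the opposite of your nesting. The paper resolves both problems at once with its size metric for parallel reductions, whose \betaneednr clause charges $\numfv{x}{\inAm[']{\Ex[']}} \times \size{\val \pstep \val'}$; size, unlike derivation structure, is stable under the duplication caused by substitution, so the residual step above is strictly smaller even though its derivation is bigger. Lemma~\ref{lem:pstepsr} is then proved by triple lexicographic induction on (1) the length $n$ of the standard sequence, (2) $\size{\e_0 \pstep \e_1}$, and (3) the structure of $\e_0$. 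Without this metric (or an equivalent device, such as the Loader technique the paper mentions in a footnote), your crux case cannot invoke any induction hypothesis and the argument does not close.
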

\begin{proof}
 Replace $\steps$ with $\pstep$s, and 
  the lemma immediately follows from lemma~\ref{lem:pstepsr}.
\end{proof}

The key to the remaining proofs is a size metric for parallel reductions.
\begin{definition}[Size of $\pstep$ Reduction]
\begin{center}
$\begin{array}{lcl}
  \size{\e \pstep \e} & = & 0 \\
  \size{\apppp{e_1}{e_2} \pstep \apppp{\e_1'}{\e_2'}} & = & \size{\e_1 \pstep \e_1'} + \size{\e_2 \pstep \e_2'} \\
  \size{\lxe \pstep \lxe'} & = & \size{\e \pstep \e'}\\
  \size{r} & = & 1+ \size{\Ap \pstep \Ap'} + \size{\A_1 \pstep \A_1'} + \size{\inAm{\Ex} \pstep \inAm[']{\Ex[']}} + {}\\
  \multicolumn{3}{r}{ \size{\A_2 \pstep A_2'} + 
		      \numfv{x}{\inAm[']{\Ex[']}} \times \size{\val \pstep \val'}}\\
\multicolumn{3}{r}{\mbox{ where } r = \inAp{\ap{\inA[_1]{\lx\inAm{\Ex}}}{\Av[_2]}} \pstep \inAp[']{\inA[_1']{\inA[_2']{\substx{\inAm[']{\Ex[']}}{\val'}}}} }
\ \\
  \numfv{x}{\e} & = & \mbox{the number of free occurrences of $x$ in $\e$}
\end{array}$
\end{center}
 The size of a parallel reduction of a context equals the sum of the sizes
 of the parallel reductions of the subcontexts and subterms that comprise
 the context.  
\end{definition}

\begin{lemma}
\label{lem:pstepsr}
If $\e_0 \pstep \e_1$ and $\srseq{\e_1}{\e_n}$, there exists
$\e_0 \srseqop {\e_1'} \srseqop \cdots \srseqop {\e_p'} \srseqop \e_n \in \srseqset$.
\end{lemma}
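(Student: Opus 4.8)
The plan is to prove the lemma by induction on the size metric $\size{\e_0 \pstep \e_1}$ of the leading parallel reduction, keeping the standard tail $\srseq{\e_1}{\e_n}$ fixed. In the base case $\size{\e_0 \pstep \e_1} = 0$ forces $\e_0 = \e_1$, so the given sequence $\srseq{\e_1}{\e_n}$ is already a standard reduction sequence out of $\e_0$ and there is nothing to do. For the inductive step I would first locate the standard redex of $\e_0$: by unique decomposition (Proposition~\ref{prop:unique}), $\e_0$ is either an answer or it splits uniquely as $\inE{\e'}$ for an evaluation context and a \betaneednr redex $\e'$. Alongside the induction I would establish the usual auxiliary substitution property for $\pstep$ (that parallel reduction commutes with substitution of a parallel-reduced value), since it is what lets me relate a contracted redex to its residuals.

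The clean case is when the parallel step $\e_0 \pstep \e_1$ actually contracts this standard redex, i.e. the redex occupies the occurrence $\inAp{\ap{\inA[_1]{\lx\inAm{\Ex}}}{\Av[_2]}}$ at which the parallel rule fires. Here I would peel off a single standard step $\e_0 \srstep \hat{\e}_0$ that contracts exactly this redex and performs none of the interior parallel work. The substitution property then yields $\hat{\e}_0 \pstep \e_1$, and, crucially, this residual parallel reduction is strictly smaller: removing the redex discharges the $+1$ charged for it, while the duplication factor $\numfv{x}{\inAm[']{\Ex[']}} \times \size{\val \pstep \val'}$ built into the size definition already pays for the copies of $\val$ created by the substitution, so the new size is exactly the old one minus one. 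Applying the induction hypothesis to $\hat{\e}_0 \pstep \e_1$ against the \emph{same} tail $\srseq{\e_1}{\e_n}$ produces a standard sequence from $\hat{\e}_0$ to $\e_n$, and prepending the step $\e_0 \srstep \hat{\e}_0$ (legal by the third clause of the definition of $\srseqset$) gives the required sequence from $\e_0$ to $\e_n$.

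The remaining situations are the real obstacle: when $\e_0$ is an answer, or when the parallel step touches only redexes strictly inside the evaluation context or inside the sub-parts of the standard redex and therefore leaves the top-level \betaneednr pattern intact. Here one cannot contract a head redex to shrink the size, and, because concatenation of standard sequences is not itself standard, one cannot simply reach $\e_1$ by an interior standard sequence and then glue on $\srseq{\e_1}{\e_n}$. Instead I would decompose $\e_0$ by its outermost syntactic shape and assemble the output with the matching clause of $\srseqset$ (the $\lambda$ clause or the operator-then-operand application clause), pushing the parallel reduction into the sub-components, where it has strictly smaller size, while threading the first move of the standard tail through the same decomposition so that demand is resolved left-to-right and outside-in. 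The hard part is the bookkeeping: the standard redex is exposed only after a deep split into the six sub-contexts $\Ap$, $\A_1$, $\Am$, $\A_2$, $\E$, and $\val$, and the contraction lifts $\A_2$ to the top of the term, so I must check that every clause of the parallel-reduction-for-contexts relation can be realigned with exactly one construction rule for $\srseqset$ as the size metric decreases. I expect essentially all the difficulty to concentrate in verifying this realignment for the nested final evaluation context $\inAp{\ap{\inA[_1]{\lx\inAm{\inE{x}}}}{\E}}$.
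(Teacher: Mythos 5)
Your plan reuses the paper's key device (the size metric) and your head-redex case matches the paper's intent, but your induction scheme has a genuine gap: a single induction on $\size{\e_0 \pstep \e_1}$ with the tail $\srseq{\e_1}{\e_n}$ held fixed is not well-founded in exactly the cases you defer to the end. The paper proves this lemma by a \emph{triple lexicographic} induction on (1) the length $n$ of the given standard reduction sequence, (2) $\size{\e_0 \pstep \e_1}$, and (3) the structure of $\e_0$, and both extra components are load-bearing. First, in the congruence cases the size does not shrink at all: by definition $\size{\lxe \pstep \lxe'} = \size{\e \pstep \e'}$, and the size of a parallel reduction of an application is the \emph{sum} of the sizes for the two components, so when one component is reflexive the other carries the entire size. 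Your claim that pushing into sub-components yields a ``strictly smaller size'' is therefore false; those recursive calls are justified only by the third lexicographic component, the structure of $\e_0$, which your measure lacks.

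Second, and more seriously, holding the tail fixed cannot work when the first tail step consumes work that the parallel step performed inside an argument. Suppose $\e_0 \pstep \e_1$ only reduces $\val \pstep \val'$ deep inside the argument part of what is also the standard redex of both terms, and the first tail step $\e_1 \srstep \e_2$ is the \betaneednr contraction that substitutes $\val'$ for the many free occurrences of the bound variable. The only way to build a standard sequence from $\e_0$ is to contract $\e_0$'s redex first (substituting the unreduced $\val$) and then recurse with a residual parallel step into $\e_2$ whose size is on the order of $\numfv{x}{\e_1} \times \size{\val \pstep \val'}$ --- strictly \emph{larger} than $\size{\e_0 \pstep \e_1}$ whenever there are two or more occurrences. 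That recursive call is licensed only because it is made against the strictly shorter tail $\srseq{\e_2}{\e_n}$, which is precisely why tail length must come \emph{first} in the lexicographic order, ahead of size. With the paper's three-component measure in place, your case analysis (peel off the head redex when the parallel step contracts it; otherwise realign the decomposition of $\e_0$ with the matching clause of $\srseqset$) goes through essentially as you describe.
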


\begin{proof}
  By triple lexicographic induction on (1) length $n$ of the given standard
  reduction sequence, (2) $\size{\e_0 \pstep \e_1}$, and (3) structure of
  $\e_0$.\footnote{We conjecture that the use of Ralph Loader's technique~\cite{Lauder98} may simplify our proof.}
\end{proof}

\subsection{Observational Equivalence}

Following Morris~\cite{Morris1968Thesis} two expressions $\e_1$ and $\e_2$ are
 observationally equivalent, $\e_1 \obseqneed \e_2$, if they are
 indistinguishable in all contexts. Formally, $\e_1 \obseqneed \e_2$ if and
 only if $\evalneed{\inC{\e_1}} = \evalneed{\inC{\e_2}}$ for all contexts $\C$,
 where
\begin{align*}
\tag{Contexts} \C = \hole \mid \lx{C} \mid \ap{\C}{\e} \mid \ap{\e}{\C}
\end{align*}
 An alternative definition of the behavioral equivalence relation uses
 co-induction. In either case, \lneed is sound with respect to observational
 equivalence. 

\begin{theorem}[Soundness] \label{thm:sound}
If $\pmb{\lneed} \vdash \e_1 = \e_2$, then $\e_1 \obseqneed \e_2$.
\end{theorem}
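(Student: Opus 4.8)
The plan is to unfold the definition of observational equivalence and reduce the claim to a property of the equational theory itself. By definition $\e_1 \obseqneed \e_2$ means $\evalneed{\inC{\e_1}} = \evalneed{\inC{\e_2}}$ for every context $\C$, so it suffices to fix an arbitrary $\C$ and establish this single equation. First I would note that the provability relation $\pmb{\lneed} \vdash \cdot = \cdot$ is, by construction, the least congruence containing the $\betaneednr$ axiom: it is closed under reflexivity, symmetry, transitivity, and placement into an arbitrary context. Hence the hypothesis $\pmb{\lneed} \vdash \e_1 = \e_2$ immediately yields $\pmb{\lneed} \vdash \inC{\e_1} = \inC{\e_2}$ for every $\C$. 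This reduces the theorem to the following local claim: whenever $\pmb{\lneed} \vdash t_1 = t_2$ for terms $t_1, t_2$, the two terms have the same evaluation result, $\evalneed{t_1} = \evalneed{t_2}$.

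Next I would discharge this local claim directly from the definition of the evaluator. Recall that $\evalneed{t} = \texttt{done}$ iff there exists an answer $\ans$ with $\pmb{\lneed} \vdash t = \ans$. Suppose $\evalneed{t_1} = \texttt{done}$, witnessed by $\ans$ with $\pmb{\lneed} \vdash t_1 = \ans$. Combining this with $\pmb{\lneed} \vdash t_1 = t_2$ via symmetry and transitivity yields $\pmb{\lneed} \vdash t_2 = \ans$, so $\evalneed{t_2} = \texttt{done}$ as well; the converse direction is symmetric. Therefore $\evalneed{t_1}$ is defined exactly when $\evalneed{t_2}$ is, and since the codomain consists of the single observation $\texttt{done}$, the two evaluations agree. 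Instantiating $t_i = \inC{\e_i}$ then gives $\evalneed{\inC{\e_1}} = \evalneed{\inC{\e_2}}$ for all $\C$, which is precisely $\e_1 \obseqneed \e_2$.

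What makes this argument legitimate rather than vacuous is that $\evalneedname$ is genuinely a well-defined (partial) function on the observation $\texttt{done}$, as established earlier through Church-Rosser (lemma~\ref{lem:cr}): consistency guarantees that the equational theory does not collapse, so that equating two terms conveys real information about shared evaluation behavior. I therefore expect the main point of care — rather than a deep obstacle — to lie in the congruence step, where Barendregt's hygiene convention must be invoked to ensure that placing $\e_1$ and $\e_2$ into the same binding context $\C$ introduces no variable-capture anomalies. A secondary remark worth making explicit is that, by theorem~\ref{thm:evalneed=evalneedsr}, the equational evaluator $\evalneedname$ coincides with the operational one $\evalneedsrname$; consequently the same observational equivalence would follow had observation been phrased through standard reduction, and it is precisely in that reformulation where Church-Rosser and standardization would carry the real weight, bridging ``provably equal to an answer'' with ``standard-reduces to an answer.''
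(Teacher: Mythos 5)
Your proof is correct, but it takes a genuinely different route from the paper's. The paper's entire proof is a one-line appeal to Plotkin's framework: a calculus satisfying Church--Rosser (lemma~\ref{lem:cr}) and Curry--Feys standardization (lemma~\ref{lem:sr}) is sound for observational equivalence. You instead argue directly from the definitions: provability in $\pmb{\lneed}$ is a congruence, so $\pmb{\lneed} \vdash \e_1 = \e_2$ yields $\pmb{\lneed} \vdash \inC{\e_1} = \inC{\e_2}$, and because $\evalneedname$ is itself defined as ``provably equal to some answer'' with the single observation $\texttt{done}$, symmetry and transitivity transport definedness of the evaluator across provable equality. This is a complete proof of the theorem as literally stated, and---notably---it uses neither Church--Rosser nor standardization anywhere. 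That is not a defect of your argument but a consequence of the paper's self-referential setup: since $\obseqneed$ is phrased via the \emph{equational} evaluator $\evalneedname$ rather than an independently defined operational one, the soundness statement is nearly tautological (indeed, your argument would go through even for an inconsistent theory, where it would hold vacuously because every term would evaluate to $\texttt{done}$). What the paper's route buys is the non-vacuous reading: Church--Rosser guarantees the theory does not collapse, and theorem~\ref{thm:evalneed=evalneedsr}---whose proof is where standardization carries the load---identifies $\evalneedname$ with the standard-reduction evaluator $\evalneedsrname$, so that soundness genuinely asserts operational indistinguishability of provably equal terms. Your closing paragraph identifies exactly this division of labor, so the difference between the two arguments is where the weight sits: the paper bundles everything into Plotkin's criteria, whereas you show that soundness per se is an elementary congruence-plus-transitivity fact and that the real mathematical content lives in the equivalence of the equational and operational evaluators.
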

\begin{proof}
 Following Plotkin, a calculus is sound if it satisfies Church-Rosser and Curry-Feys theorems.
\end{proof}

\section{Correctness} \label{sec:correctness}
 
Ariola and Felleisen~\cite{Ariola1997Cbneed} prove that \laf defines
 the same evaluation function as the call-by-name \lc. Nakata and
 Hasegawa~\cite{Nakata2009Cbneed} additionally demonstrate extensional
 correctness of the same calculus with respect to Launchbury's natural
 semantics~\cite{Launchbury1993Natural}. In this section, we show that
 \lneed defines the same evaluation function as Launchbury's 
 semantics. While our theorem statement is extensional, the proof
 illuminates the tight intensional relationship between the two
 systems.

\subsection{Overview}

The gap between the \lneed\ standard reduction ``machine'' and Launchbury's
 natural semantics is huge. While the latter's store-based natural semantics
 uses the equivalent of assignment statements to implement the ``evaluate once,
 only when needed'' policy, the \lneed\ calculus exclusively relies on term
 substitutions. To close the gap, we systematically construct a series of
 intermediate systems that makes comparisons easy, all while ensuring
 correctness at each step. A first step is to convert the natural semantics
 into a store-based machine~\cite{Sestoft1997Deriving}.
 
To further bridge the gap we note that a single-use assignment statement is
 equivalent to a program-wide substitution of shared
 expressions~\cite{Felleisen1989State}. A closely related idea is to reduce
 shared expressions simultaneously. This leads to a parallel program rewriting
 system, dubbed \lstep. Equipped with \lstep we get closer to \lneed\ but not
 all the way there because reductions in \lneed and \lstep are too
 coarse-grained for direct comparison. Fortunately, it is easy to construct an
 intermediate transition system that eliminates the remainder of the gap. We
 convert \lneed to an equivalent CK transition
 system~\cite{Felleisen2009Redex}, where the program is partitioned into a
 control string (C) and an explicit context (K) and we show that there is a
 correspondence between this transition system and \lstep.

\begin{figure}[htbp]
  $$\xymatrixcolsep{13mm}\xymatrixrowsep{6mm}\xymatrix{
    \textrm{store machine} \hspace{-12mm}
    \ar@<12mm>[d]_*+{\buildLname}
    &
    \ar@{|->}[r] & 
    \ar@{|->}[r]^{:=} \ar@{-->}[dl] & 
    \ar@{|->}[r] \ar@{-->}[drr] & 
    \ar@{|->}[r] & \\
    \lstep \hspace{-25mm} 
    &
    \ar@{|->}[rrrr]^{\betastep} & & & & \\
    \textrm{CK transitions} \hspace{-12mm} 
    \ar@<-12mm>[u]^*+{\buildtostepname}
    \ar@<12mm>[d]_*+{\buildname}
    &
    \ar@{|->}[r] & 
    \ar@{|->}[r]^{go\;under\;\lambda} \ar@{-->}[ul]   & 
    \ar@{|->}[r] \ar@{-->}[urr]  & 
    \ar@{|->}[r]^{\betaneedck} \ar@{-->}[dlll] & \ar@{-->}[d] \\
    \lneed \hspace{-25mm}  &
    \ar@{|->}[rrrr]^{\betaneednr} & & & & \\
  }$$
  \caption{Summary of correctness proof technique.}
  \label{fig:correctness}
\end{figure}
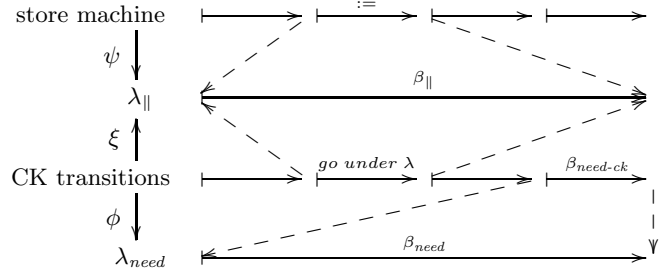

Figure~\ref{fig:correctness} outlines our proof strategy pictorially. The four
 horizontal layers correspond to the four rewriting systems. While \lneed and
 \lstep use large steps to progress from term to term, the machine-like systems
 take several small steps. The solid vertical arrows between the layers figure
 indicate how mapping functions relate the rewriting sequences and the dashed
 arrows show how the smaller machine steps correspond to the larger steps of
 \lneed and \lstep:
\begin{itemize}
  \item The \buildLname function maps states from the store-based machine to
    terms in the \lstep world. For every step in the natural-semantics machine,
    the resulting operation in \lstep is either a no-op or a \betastep
    reduction, with assignment in the store-machine being equivalent to
    program-wide substitution in \lstep.
  \item Similarly, the \buildtostepname function maps states of the CK
    transition system to the \lstep space and for every CK transition, the
    resulting \lstep operation is also either a no-op or a \betastep reduction,
    with the transition that descends under a $\lambda$-abstraction being
    equivalent to substitution in \lstep.
  \item Finally, the \buildname function maps states of the CK transition system to
    \lneed terms and is used to show that the CK system and \lneed are
    equivalent. For every CK transition, the equivalent \lneed operation is
    either a no-op or a \betaneednr reduction.
\end{itemize}

\begin{figure*}[htbp]

$\begin{array}{l@{\qquad\qquad\qquad\qquad\qquad\qquad}rcl@{\qquad\qquad\qquad\qquad}r}
 \multicolumn{2}{l}{\underline{Syntax}} \\
    & \SL    & = & \ckh{\e}{\FLs}{\Gamma}   & \mbox{(States)} \\
    & \FLs   & = & \FLdots                  & \mbox{(List of Frames)}\\
    & \FL    & = & \argFe \mid \varFx       & \mbox{(Frames)} \\
    & \Gamma & = & (\heapelt{x}{\e},\ldots) & \mbox{(Heaps)}
  \end{array}$

$$\begin{array}{lrcl@{\ }r}
 \multicolumn{2}{l}{\underline{Transitions}}\\
    & \ckh{\ap{\e_1}{\e_2}}{\FLs}{\Gamma}    &\ckhstep & \ckhp{\e_1}{\argFe[_2],\FLs}{\Gamma}                     & \mbox{(\pushargckh)} \\
    & \ckhp{\lxe_1}{\argFe[_2],\FLs}{\Gamma} &\ckhstep & \ckh{\substx{\e_1}{y}}{\FLs}{(\Gamma,\heapelt{y}{\e_2})}, 
      					     	       \ \fresh{y}                                            & \mbox{(\descendlamckh)}\\
    & \ckh{x}{\FLs}{(\Gamma,x\mapsto\e)}     &\ckhstep & \ckhp{\e}{\varFx,\FLs}{\Gamma}                           & \mbox{(\lookupvarckh)}\\
    & \ckhp{\val}{\varFx,\FLs}{\Gamma}	     &\ckhstep & \ckh{\val}{\FLs}{(\Gamma,x\mapsto\val)}                  & \mbox{(\updateheapckh)}
  \end{array}$$

  \caption{The natural semantics as an abstract machine.} \label{fig:launchburymachine}
\end{figure*}

Subsections~\ref{subsec:launchbury} and \ref{subsec:lstep} present the
 store-based machine and the parallel rewriting semantics, respectively,
 including a proof of equivalence. Subsection~\ref{subsec:ck} presents the CK
 system and subsection~\ref{subsec:relatingmachines} explains the rest of the
 proof.

\subsection{Adapting Launchbury's Natural Semantics}
\label{subsec:launchbury}

Figure~\ref{fig:launchburymachine} describes the syntax and
 transitions of the store machine.\footnote{To aid comparisons, we slightly
   alter Launchbury's rules (and the resulting machine) to use pure $\lambda$
   terms. Thus we avoid Launchbury's preprocessing and special syntax.}  It is
 dubbed CKH because it resembles a three-register
 machine~\cite{Felleisen2009Redex}: a machine state $\SL$ is comprised of a
 control string (C), a list of frames (K) that represents the control context
 in an inside-out manner, and a heap (H). The $\ldots$ notation means ``zero of
 more of the preceding kind of element.'' An $\argFe$ frame represents the
 argument in an application and the $\varFx$ frame indicates that a heap
 expression is the current control string. Parentheses are used to group a list
 of frames when necessary. The initial machine state for a program $\e$ is
 $\ckhe{()}{()}$. Computation terminates when the control string is a value and the list
 of frames is empty.

The \pushargckh transition moves the argument in an application to a new
 \texttt{arg} frame in the frame list and makes the operator the next control
 string. When that operator is a $\lambda$-abstraction, the \descendlamckh
 transition adds its argument to the heap,\footnote{The notation
   $(\Gamma,\heapelt{x}{\e})$ is a heap $\Gamma$, extended with the
 variable-term mapping $\heapelt{x}{\e}$.} mapped to a fresh variable name,
 and makes the body of the operator the new control string. The \lookupvarckh
 transition evaluates an argument from the heap when the control string is a
 variable. The mapping is removed from the heap and a new $\varFx$ frame
 remembers the variable whose corresponding expression is under
 evaluation. Finally, when the heap expression is reduced to a value, the
 \updateheapckh transition extends the heap again.

\subsection{Parallel Rewriting}
\label{subsec:lstep}

The syntax of the parallel $\lambda$-rewriting semantics is as follows:
\begin{align*}
  \tag{Terms}
  \estep &= \e \mid \labxe \\
  \tag{Values}
  \vstep &= \val \mid \labx{\vstep}\\
  \tag{Evaluation Contexts}
  \Estep &= \hole \mid \ap{\Estep}{\estep} \mid \labx{\Estep}
\end{align*}
 This system expresses computation with a selective parallel reduction
 strategy. When a function application is in demand, the system substitutes
 the argument for all free occurrences of the bound variable, regardless of
 the status of the argument. When an instance of a substituted argument is
 reduced, however, all instances of the argument are reduced in
 parallel. Here is a sample reduction:

\begin{align*}
  \appp{\lx\ap{x}{x}}{\ap{I}{I}} \stepstep 
  \ap{\labx{(\ap{I}{I})}}{\labx{(\ap{I}{I})}} \stepstep 
  \ap{\labx{I}}{\labx{I}} \stepstep I
\end{align*}
%

The \lstep semantics keeps track of arguments via labeled
 terms $\labxe$, where labels are variables. Values in \lstep also include
 labeled $\lambda$-abstractions. Reducing a labeled term triggers the
 simultaneous reduction of all other terms with the same
 label. Otherwise, labels do not affect program evaluation.

We require that all expressions with the same label must be identical.
\begin{definition}
  A program $\estep$ is \emph{consistently labeled (CL)} when for any two subterms $\labx[_1]{\estep_1}$ and
  $\labx[_2]{\estep_2}$ of $\estep$, $x_1 = x_2$ implies $\estep_1 = \estep_2$.
\end{definition}

In the reduction of \lstep programs, evaluation contexts $\Estep$ determine
 which part of the program to reduce next. The \lstep evaluation contexts are
 the call-by-name evaluation contexts with the addition of the labeled
 $\labx{\Estep}$ context, which dictates that a redex search goes under labeled
 terms. Essentially, when searching for a redex, terms tagged with a label are
 treated as if they were unlabeled.
 
The parallel semantics can exploit simpler evaluation contexts than \lneed
 because substitution occurs as soon as an application is encountered:
$$
 \begin{array}{ll}
   \inEstep{\app{\labvec{y}{(\lx\estep_1)}}{\estep_2}}\stepstep
   \begin{cases}
    \inEstep{\estep},     \hspace{1.1cm} \textrm{if \hole\ is not under a label in } \Estep \\
    \substlab{ \inEstep{\estep} }{ z }{ \inEstep[_2]{\estep} }, \textrm{ if } \inEstephole = \inEstep[_1]{\labz{(\inEstephole[_2])}}\\
       \multicolumn{2}{r}{\hspace{1.6cm}\textrm{and \hole\ is not under a label in } \Estep_2 }
   \end{cases} &								\hspace{12pt} (\betastep)\\
   \multicolumn{2}{c}{\textrm{where } \estep = \substx{\estep_1}{\lab{w}{\estep_2}}, w \textrm{ fresh}}
 \end{array}
$$
 On the left-hand side of \betastep, the program is partitioned into a context
 and a $\beta$-like redex. A term 
 $\labvec{y}{\e}$ may have any number of labels and possibly
 none. On the right-hand side, the redex is contracted to a term
 $\substx{\estep_1}{\lab{w}{\estep_2}}$ such that the argument is tagged with
 an unique label $w$. Obsolete labels $\vec{y}$ are discarded.
%
%
%
%

There are two distinct ways to contract a redex: when the redex is not under
 any labels and when the redex occurs under at least one label. For the former,
 the redex is the only contracted part of the program. For the latter,  all other instances of that labeled term are
 similarly contracted. In this second case, the evaluation context is further
 subdivided as $\inEstephole = \inEstep[_1]{\labz{(\inEstephole[_2])}}$, where
 $z$ is the label nearest the redex, i.e., $\Estep_2$ contains no additional
 labels. A whole-program substitution function is used to perform the parallel
 reduction:
  \begin{align*}
    \substlabxe{\labx{\estep_1}} &= \labxe \\
    \substlabye{\labx{\estep_1}} &= 
    \labx{(\substlabye{\estep_1})},
        \; x \neq y  \\
    \substlabxe{(\lx\estep_1)} &= \lx(\substlabxe{\estep_1}) \\
    \substlabxe{\apppp{\estep_1}{\estep_2}} &=
    \apppp{\substlabxe{\estep_1}}
          {\substlabxe{\estep_2}}\\
    \textrm{otherwise, }
    \substlabxe{\estep_1} &= \estep_1
  \end{align*}
%
%
%
%

Rewriting terms with $\betastep$ preserves the consistent labeling property.
\begin{proposition}
If $\estep$ is CL and $\estep \stepstep \estep '$, then $\estep'$ is CL.
\end{proposition}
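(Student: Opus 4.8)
The plan is to prove single-step preservation by case analysis on the two clauses of the $\betastep$ rule, tracking in each case which labels occur in the result $\estep'$ and checking that any two identically labeled subterms stay syntactically equal. Throughout I rely on three facts: the freshness of the new label $w$; the consistent labeling of the source $\estep$, which forces all subterms sharing a label to be identical; and Barendregt's hygiene convention, under which the bound variable $x$ of the contracted redex $\app{\labvec{y}{(\lx\estep_1)}}{\estep_2}$ occurs free nowhere outside its body $\estep_1$. The useful auxiliary observation extracted from hygiene is that \emph{every subterm of $\estep$ mentioning $x$ freely lies within $\estep_1$}, hence within the redex; I would record this first, as it drives both cases.

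First the easier clause, where the hole of $\Estep$ sits under no label, so $\estep' = \inEstep{\estep}$ with $\estep = \substx{\estep_1}{\lab{w}{\estep_2}}$ and only the redex position is edited. I argue CL label by label. The fresh $w$ tags exactly the copies of $\lab{w}{\estep_2}$ introduced by the substitution; these are pairwise identical and $w$ collides with no pre-existing label. Any label internal to $\estep_2$ is merely duplicated, once per free occurrence of $x$, without alteration, so its copies agree and, by CL of $\estep$, match any occurrence outside the redex. For a label $u$ internal to $\estep_1$: if its subterm has no free $x$ the substitution is inert on it; if it has a free $x$ then, by the auxiliary observation, $u$ cannot occur outside $\lx$'s scope, so there is no external occurrence to conflict with the substituted copy. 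Discarding the labels $\vec{y}$ only removes occurrences and cannot break CL.

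The main work is the second clause, where $\inEstephole = \inEstep[_1]{\labz{(\inEstephole[_2])}}$ and $\estep' = \substlab{\inEstep{\estep}}{z}{\inEstep[_2]{\estep}}$. Here CL of $\estep$ is essential: every $\labz{\cdot}$ subterm of $\estep$ has the same body $\inEstep[_2]{\app{\labvec{y}{(\lx\estep_1)}}{\estep_2}}$, so each contains a copy of the same redex, and (since a finite term cannot properly contain itself) these $z$-regions do not nest. The whole-program substitution rewrites the body under every $z$-label uniformly to the contractum $\inEstep[_2]{\estep}$, while leaving everything outside the $z$-regions exactly as in $\estep$ -- the sole redex reduced inside $\inEstep{\estep}$ lies under $\labz{\cdot}$ and is therefore itself overwritten. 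Consequently all $z$-labeled subterms of $\estep'$ are identical, so $z$ is consistent. For a label $u \neq z$ I split on whether its subterm mentions $x$ freely. If it does, the auxiliary observation confines all its occurrences to $\estep_1$, hence to inside the $z$-regions, where the uniform substitution reduces every copy to the same term and no external occurrence exists. If it does not, the $x$-substitution is inert on it, so its copies inside the $z$-regions are unchanged and still agree -- by CL of $\estep$ -- with any copy outside. The fresh $w$ is handled as before.

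The delicate step is reconciling a label $u$ that, in the source, occurs both inside a $z$-region and outside it: naively the uniform $z$-substitution rewrites the inside copies but not the outside ones and could desynchronize them. The resolution is exactly the hygiene argument: the only subterms altered by contracting the redex are those mentioning the bound $x$, and hygiene prevents such subterms from occurring outside the redex. I expect making this confinement precise -- and verifying it survives the duplication of $\estep_2$ under the fresh label $w$ -- to be the crux, which is why I would isolate the auxiliary observation before running the case analysis rather than rediscover it inside each branch.
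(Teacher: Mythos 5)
The paper states this proposition without any proof, so there is nothing to compare against; your attempt has to stand on its own, and it has a genuine gap. Your case split and your identification of the crux (a label occurring both inside and outside the rewritten region) are right, but the auxiliary observation you build everything on --- ``every subterm of $\estep$ mentioning $x$ freely lies within $\estep_1$,'' justified by Barendregt's convention --- is untenable in this system, because $\betastep$ itself destroys the hygiene you are assuming. CL forces label-mates to be \emph{literally identical}, so whenever an argument containing a $\lambda$-abstraction is duplicated, the resulting term repeats binder names. The paper's own first example does this in one step: $\appp{\lx\ap{x}{x}}{\ap{I}{I}} \stepstep \ap{\labx{(\ap{I}{I})}}{\labx{(\ap{I}{I})}}$ with $I = \lz z$. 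The right-hand term is CL but not hygienic, and your confinement claim already fails for its next redex: the redex is the application $\ap{I}{I}$ inside the first $x$-labeled region, and the bound variable $z$ of its operator occurs free in subterms of the \emph{second} $x$-labeled region, entirely outside the redex. Hence your argument cannot be iterated along a reduction sequence, which is the sole purpose of this proposition as an invariant. (In that instance your case-2 conclusion happens to survive, because every offending occurrence sits inside a $z$-region that the whole-program substitution rewrites uniformly; but that is precisely the fact needing proof, and your stated justification for it is the confinement claim that just failed.)

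You also cannot retreat to ``hygiene is a standing hypothesis on $\estep$,'' because without some such extra condition the statement is literally false, and with strict hygiene it excludes the very terms reduction produces. For the falsity: take $\estep_0 = \ap{(\ap{\lab{y}{(\lx\lab{u}{x})}}{a})}{\lab{y}{(\lx\lab{u}{x})}}$ with $a$ closed. It is CL, its redex $\ap{\lab{y}{(\lx\lab{u}{x})}}{a}$ sits under no label, and the first clause of $\betastep$ yields $\ap{\lab{u}{(\lab{w}{a})}}{\lab{y}{(\lx\lab{u}{x})}}$, which is not CL: the two $u$-labeled bodies now differ. Strict hygiene rules out $\estep_0$ (two binders named $x$), but by the previous paragraph strict hygiene is not preserved by $\betastep$, so it is not available as an inductive hypothesis. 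What the proof actually needs is a \emph{self-sustaining} strengthening of CL, for instance: no labeled body mentions a variable that is bound anywhere in the program (for closed programs, labeled bodies are closed). This holds for all terms reachable from unlabeled programs precisely because $\Estep$ never descends under a $\lambda$-abstraction, so every argument $\estep_2$ that gets tagged lies in the scope of no binder; it is preserved by both clauses of $\betastep$; and with it your label-by-label analysis becomes easy, since labels inside $\estep_1$ are then inert under the substitution and labels under $z$ are rewritten uniformly. (Alternatively one can work up to the label-extended $\alpha$-equivalence of the paper's footnote and re-hygienize after every step, but then CL itself must be restated up to $\alpha$ and that restatement shown invariant.) Identifying and carrying such an invariant is the missing idea; it is the crux of the proposition, not a detail one can delegate to ``the usual conventions.''
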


The \buildLname function reconstructs a \lstep term from a CKH machine
 configuration:
  \begin{align*}
    %
    %
    \tag*{\boxed{\buildLname : \SL \rightarrow \estep}} \\[-20pt]
    \buildL{\ckhe{(\varFx,\FLs)}{\Gamma}} &= 
      \buildL{\ckh{x}{\FLs}{(\heapelt{x}{\e},\Gamma)}} \\
    \buildL{\ckhe[_1]{(\argFe[_2],\FLs)}{\Gamma}} &=
      \buildL{\ckh{\ap{\e_1}{\e_2}}{\FLs}{\Gamma}} \\
    \buildL{\ckhe{()}{\Gamma}} &= \substheap{\e}{\Gamma}
  \end{align*}
 The operation $\substheap{\e}{\Gamma}$, using overloaded notation, replaces all free
 variables in $\e$ with their corresponding terms in $\Gamma$ and tags them with
 appropriate labels.
%
%

Lemma~\ref{lem:ckh->lstep} demonstrates the bulk of the equivalence of the
 store machine and \lstep.\footnote{The lemma relies on an extension of the
   typical $\alpha$-equivalence classes of terms to include variables in labels
   as well.} The rest of the equivalence proof is
 straightforward~\cite{Felleisen2009Redex}.

\begin{lemma}
\label{lem:ckh->lstep}
  If $\ckhe{\FLs}{\Gamma} \ckhstep \ckhe[']{\FLs'}{\Gamma'}$, then either:
  \begin{enumerate}
  \item $\buildL{\ckhe{\FLs}{\Gamma}} = \buildL{\ckhe[']{\FLs'}{\Gamma'}}$
  \item $\buildL{\ckhe{\FLs}{\Gamma}}\stepstep\buildL{\ckhe[']{\FLs'}{\Gamma'}}$
  \end{enumerate}
\end{lemma}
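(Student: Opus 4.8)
The plan is to prove the lemma by case analysis on the four CKH transition rules, showing in each case that the \buildLname image of the source state is either equal to (no-op) or related by a single \betastep to the \buildLname image of the target state. Since \buildLname is defined by recursion that peels frames off the list and ultimately applies the heap-substitution $\substheap{\cdot}{\Gamma}$, the key observation driving every case is that \buildLname first ``unwinds'' the entire frame list into a single \lstep term (rebuilding the surrounding evaluation context) and then performs the whole-program heap substitution. So I would first establish a convenient normal form: for any state $\ckhe{\FLs}{\Gamma}$, unwinding the frames produces a term of the shape $\inEstep{\substheap{\e}{\Gamma'}}$ where $\Estep$ is the \lstep evaluation context reconstructed from $\FLs$ and $\Gamma'$ is the portion of the heap threaded through. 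This lets me compare source and target states by comparing their reconstructed contexts and control strings directly.

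I would then dispatch the four rules. The \pushargckh transition $\ckh{\ap{\e_1}{\e_2}}{\FLs}{\Gamma}\ckhstep\ckhp{\e_1}{\argFe[_2],\FLs}{\Gamma}$ is a pure no-op for \buildLname: the second defining clause of \buildLname immediately re-folds the \texttt{arg} frame back into the application, so both sides map to the identical \lstep term, landing in case~(1). The dual \lookupvarckh transition $\ckh{x}{\FLs}{(\Gamma,\heapelt{x}{\e})}\ckhstep\ckhp{\e}{\varFx,\FLs}{\Gamma}$ is likewise a no-op, since the first defining clause of \buildLname reinstalls the heap binding and restores $x$ as the control string, modulo the labeling induced by $\substheap{\cdot}{\Gamma}$; here I would lean on the footnoted extension of $\alpha$-equivalence to labels so that the reconstructed labeled term matches. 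The \updateheapckh transition $\ckhp{\val}{\varFx,\FLs}{\Gamma}\ckhstep\ckh{\val}{\FLs}{(\Gamma,x\mapsto\val)}$ is also a no-op, but this is the case I expect to require the most care: I must check that re-inserting the now-evaluated $\val$ under the label $x$ (via the \varFx clause on the left) produces the same \lstep term as substituting $\val$ through the heap on the right. This is precisely where the CL invariant matters, because it guarantees all occurrences carrying label $x$ are syntactically the same $\val$, so that folding the frame and substituting through the heap agree.

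The genuinely productive case is \descendlamckh: $\ckhp{\lxe_1}{\argFe[_2],\FLs}{\Gamma}\ckhstep\ckh{\substx{\e_1}{y}}{\FLs}{(\Gamma,\heapelt{y}{\e_2})}$ with $y$ fresh. On the source side, unwinding rebuilds $\inEstep{\app{\labvec{y}{(\lx\e_1)}}{\e_2}}$ after heap substitution; on the target side, the fresh binding $\heapelt{y}{\e_2}$ forces $\substheap{\cdot}{(\Gamma,\heapelt{y}{\e_2})}$ to replace the free $y$'s in $\substx{\e_1}{y}$ with the labeled argument $\labsub{y}{\e_2}$, which is exactly the right-hand side of a \betastep contraction. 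I would verify this matches $\betastep$ by checking both clauses of the \betastep definition: when $\hole$ sits under no label in the reconstructed context we get case~(2) via the first \betastep clause, and when it sits under a label the freshness of $y$ together with CL ensures the whole-program substitution $\substlabname$ reproduces the second clause. The main obstacle, then, is bookkeeping the correspondence between the store machine's fresh-naming-plus-assignment discipline and \lstep's label-plus-whole-program-substitution discipline; the heart of the argument is that a single-use heap assignment is exactly a labeled whole-program substitution, which is the informal slogan the section already advertises.
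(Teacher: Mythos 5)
Your case split and your classification of the four transitions form the right skeleton, and they agree with the paper's own outline in figure~\ref{fig:correctness}: \pushargckh and \lookupvarckh are definitional no-ops---each is literally undone by one of the two recursive clauses of \buildLname, so the two images are identical and no $\alpha$-machinery is even needed for \lookupvarckh---and \descendlamckh is the single productive case, where the source image exhibits the application and the target image exhibits the substituted body, with the fresh heap variable $y$ playing the role of \betastep's fresh label $w$, split according to whether the focus sits under a label. That much of your proposal is correct.

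The gap is in your \updateheapckh case, and it is genuine. Unwinding the left state $\ckhp{\val}{\varFx,\FLs}{\Gamma}$ through the \varFx clause of \buildLname re-installs the binding $\heapelt{x}{\val}$ with the variable $x$ at the control-string position, so after the heap substitution that position holds the \emph{labeled} value $\labx{\val}$ (with the substitution also applied inside $\val$); unwinding the right state $\ckh{\val}{\FLs}{(\Gamma,x\mapsto\val)}$ leaves the \emph{unlabeled} value there, the label $x$ appearing only around the residual occurrences of the variable $x$ elsewhere in the term. The two images therefore differ by exactly one label, and neither clause of the lemma applies as you argue it: they are not syntactically equal, and no \betastep relates them because there is no application at that position. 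The CL invariant cannot close this hole---CL only guarantees that any two subterms carrying the \emph{same} label are identical; it says nothing about a labeled versus an unlabeled occurrence of the same value, which is the actual discrepancy. The footnoted extension of $\alpha$-equivalence to label variables does not help either, since renaming labels never erases one. What is missing is a deliberate convention about labels on values: either compare \buildLname images modulo erasure of labels wrapped around values (harmless, because $\Estep$ never descends into a value, so such a label can never trigger a parallel update), or define the ``appropriate labels'' of $\substheap{\e}{\Gamma}$ so that value entries are substituted unlabeled---in which case your \descendlamckh case needs a matching adjustment, because \betastep tags the substituted argument with a fresh label whether or not it is a value. Until one of these choices is pinned down, the equality you assert for \updateheapckh is false as written.
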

%
%

\subsection{A Transition System for Comparing \lneed and \lstep} \label{subsec:ck}

The CK layer in figure~\ref{fig:correctness} mediates between \lstep and
 \lneed. The corresponding transition system resembles a two-register CK 
 machine~\cite{Felleisen2009Redex}. Figure~\ref{fig:ck}
 describes the syntax and the transitions of the system.\footnote{The CK
 transition system is a proof-technical device. Unlike the original CK
 machine, ours is ill-suited for an implementation.}
\begin{figure*}[htbp]
%
%
$\begin{array}{l@{\qquad\qquad\qquad\qquad\qquad\qquad}rcl@{\qquad\quad}r}
\multicolumn{2}{l}{\underline{Syntax}} \\
    & \St & = & \cke{\Fs}			        & \mbox{(States)}\\
    & \Fs & = & \Fdots			                & \mbox{(List of Frames)}\\
    & \F  & = & \argFe \mid \lxF \mid \bodyFx{\Fs}{\Fs} & \mbox{(Frames)}
\end{array}$

$$\begin{array}{lrcl@{\quad}r}
\multicolumn{2}{l}{\underline{Transitions}}\\
  & \ckFs{\ap{\e_1}{\e_2}}                      & \ckstep & \cke[_1]{\consFs{\argFe[_2]}}                    & \mbox{(\pushargck)} \\
  & \ckFs{\lxe}                                 & \ckstep & \cke{\consFs{\lxF}}				     & \mbox{(\descendlamck)}\\
  & 						&         & \mbox{ if }\bal{\Fs} > 0 \\
  & \ckp{x}{\Fs_1,\lxF,\Fs_2,\argFe,\Fs}        & \ckstep &\ckep{\bodyFx{\Fs_1}{\Fs_2},\Fs}		     & \mbox{(\lookupvarck)} \\
    \multicolumn{4}{r}{\mbox{if } \buildF{\Fs_1} \in \inAm{\E},
    				  \buildF{\Fs_2} \in \A, 
				  \buildF{\Fs} \in \inE{\Ap},
				  \inAp{\Am} \in A}\\
  & \ckp{\val}{\Fs_3,\bodyFx{\Fs_1}{\Fs_2},\Fs} & \ckstep & \ckp{\val}{\substx{\Fs_1}{\val},\Fs_3,\Fs_2,\Fs} & \mbox{(\betaneedck)}\\
  & 						& 	  & \mbox{ if } \buildF{\Fs_3} \in \A 
\end{array}$$

\caption{A transition system for comparing \lneed and \lstep.}  \label{fig:ck}
\end{figure*}

States consist of a subterm and a list of frames representing the context. The
 first kind of frame represents the argument in an application and the second
 frame represents a $\lambda$-abstraction with a hole in the body. The last
 kind of frame has two frame list components, the first representing a context
 in the body of the $\lambda$, and the second representing the context between
 the $\lambda$ and its argument. The variable in this last frame is the
 variable bound by the $\lambda$ expression under evaluation. The initial state
 for a program $\e$ is $\ckemt$, where () is an empty list of frames, and
 evaluation terminates when the control string is a value and the list of
 frames is equivalent to an answer context.

The \pushargck transition makes the operator in an application the new control
 string and adds a new \texttt{arg} frame to the frame list containing the
 argument. The \descendlamck transition goes under a $\lambda$, making the body
 the control string, but only if that $\lambda$ has a corresponding argument in
 the frame list, as determined by the \balancename function, defined as follows:
$$\begin{array}{rcl}
  \multicolumn{3}{r}{ {\boxed{\balancename : \Fs \rightarrow \mathbb{Z}}} }\\
 \bal{\Fs_3,\bodyFx{\Fs_1}{\Fs_2},\Fs} & = & \bal{\Fs_3} \\
                             \bal{\Fs} & = & \texttt{\#arg-frames}(\Fs) - \texttt{\#lam-frames}(\Fs) \\
			               &   & \quad \Fs \textrm{ contains no } \texttt{bod}\textrm{ frames}
\end{array}$$
 The \balancename side condition for \descendlamck dictates that evaluation
 goes under a $\lambda$ only if there is a matching argument for it, thus
 complying with the analogous evaluation context. The \balancename function
 employs \texttt{\#arg-frames} and \texttt{\#lam-frames} to count the number of \texttt{arg} or
 \texttt{lam} frames, respectively, in a list of frames. Their definitions are
 elementary and therefore omitted.

The \lookupvarck transition is invoked if the control string is a variable,
 somewhere in a $\lambda$ body, and the rest of the frames have a certain shape
 consistent with the corresponding parts of a \betaneednr redex. With this
 transition, the argument associated with the variable becomes the next control
 string and the context around the variable in the $\lambda$ body and the
 context between the $\lambda$ and argument are saved in a new \texttt{bod}
 frame. Finally, when an argument is an answer, indicated by a value control
 string and a \texttt{bod} frame in the frame list---with the equivalent of an
 answer context in between---the value gets substituted into the body of the
 $\lambda$ according to the \betaneedck transition. The \betaneedck transition
 uses a substitution function on frame lists, $\substx{\Fs}{\e}$, which
 overloads the notation for regular term substitution and has the expected
 definition.

%
%


%
\begin{figure}[tbhp]
%
%
%
\begin{minipage}[b]{0.44\linewidth}
  \begin{align*}
    \tag*{\boxed{\buildname : \St \rightarrow \e}}\\
    \build{\ckeFs} &= \inhole{\buildF{\Fs}}{\e} \\
    \\[-7pt]
    \tag*{\boxed{\buildFname : \Fs \rightarrow \E}}\\
    \buildF{()} &= \hole \\
    \buildF{\lxF,\Fs} &= \inhole{\buildF{\Fs}}{\lx\hole} \\
    \buildF{\argFe,\Fs} &= \inhole{\buildF{\Fs}}{\ap{\hole}{\e}} \\
    \buildF{\bodyFx{\Fs_1}{\Fs_2},\Fs} &=\\
    &\hspace{-2.2cm}\inhole{\buildF{\Fs}}
                  {\ap{\inhole{\buildF{\Fs_2}}
                             {\lx\inhole{\buildF{\Fs_1}}
                                        {x}}}
                     {\hole}}\\[14pt]
\end{align*}
\end{minipage}
\rule{0.5pt}{5.8cm}
\begin{minipage}[b]{0.54\linewidth}
  \begin{align*}
    %
    %
    %
    \tag*{\boxed{\buildtostepname : \St \rightarrow \estep }}\\
    \buildtostep{\ckeFs} &= \buildtosteppnop{\Fs}{\e}\\
    \\[-7pt]
    \tag*{\boxed{\buildtosteppname : \Fs \times \estep \rightarrow \estep }}\\
    \buildtosteppe{\;} &= \estep \\
    \buildtostepp{\argF{\estep_1},\Fs}{\estep}
        &=
    \buildtosteppnop{\Fs}{\ap{\estep}{\estep_1}} \\
    \buildtostepp{\bodyFx{\Fs_1}{\Fs_2},\Fs}{\estep} &=\\[-1pt]
    &\hspace{-19mm}\buildtostepp{\Fs_1,\lxF,\Fs_2,\argF{\estep},\Fs}{x}\\[-29pt]
  \end{align*}
  \begin{align*}
  \buildtosteppe{\lxF,\Fs_1,\argF{\estep_1},\Fs_2}  =\hspace{15mm}\\
    \buildtostepp{\Fs_1,\Fs_2}{\substx{\estep}{\laby{\estep_1}}}\\[-2pt]
      \buildF{\Fs_1} \in \A,\,\,y \textrm{ fresh}\\[-12pt]
  \end{align*}
\end{minipage}
\vspace{-15pt}
  \caption{Functions to map CK states to \lneed (\buildname) and \lstep (\buildtostepname).}
  \label{fig:CKtolneedandlstep}
\vspace{-10pt}
\end{figure}

Figure~\ref{fig:CKtolneedandlstep} defines metafunctions for the CK transition
 system. The \buildname function converts a CK state to the equivalent \lneed
 term, and uses \buildFname to convert a list of frames to an evaluation
 context.

Now we can show that an evaluator defined with $\,\ckstep\,$ is equivalent to
 $\evalneedsrname$. The essence of the proof is a lemma that relates the shape
 of CK transition sequences to the shape of \lneed standard reduction
 sequences. The rest of the equivalence proof is
 straightforward~\cite{Felleisen2009Redex}.
\begin{lemma}
\label{lem:ck->lneed}
  If $\ckeFs \ckstep \cke[']{\Fs'}$, then either:
  \begin{enumerate}
   \item $\build{\ckeFs} = \build{\cke[']{\Fs'}}$
   \item $\build{\ckeFs} \srstep \build{\cke[']{\Fs'}}$
  \end{enumerate}
\end{lemma}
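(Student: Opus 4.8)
The plan is to argue by cases on which of the four transitions of Figure~\ref{fig:ck} justifies $\ckeFs \ckstep \cke[']{\Fs'}$, in each case unfolding \buildname and \buildFname on both sides and checking that the two \lneed terms are either identical (alternative~1) or related by one \betaneednr standard-reduction step (alternative~2). The one auxiliary fact I would prove first is a \emph{compositionality lemma} for \buildFname: for any two frame lists, $\buildF{\Fs_1,\Fs_2} = \inhole{\buildF{\Fs_2}}{\buildF{\Fs_1}}$, with $\Fs_1$ the part closer to the hole. This follows by a routine induction on $\Fs_1$ over the four clauses of \buildFname, and it lets me read a frame list inside-out so that every unfolding below becomes mechanical. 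Recall also that $\inE{\e}\srstep\inE{\e'}$ whenever $\e\betaneednr\e'$, so to land in alternative~(2) it suffices to exhibit an evaluation context and a redex.

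Three of the four transitions are purely administrative and land in alternative~(1). For \pushargck both states build to $\inhole{\buildF{\Fs}}{\ap{\e_1}{\e_2}}$, and for \descendlamck both build to $\inhole{\buildF{\Fs}}{\lxe}$; these are immediate from the \texttt{arg} and \texttt{lam} clauses of \buildFname. The \lookupvarck case is only slightly subtler: using compositionality, the source state builds to $\inhole{\buildF{\Fs}}{\ap{\inhole{\buildF{\Fs_2}}{\lx\inhole{\buildF{\Fs_1}}{x}}}{\e}}$, and the target state---through the \texttt{bod} clause of \buildFname---builds to exactly the same term. The transition merely relocates the hole from the demanded occurrence of $x$ to the argument position, so as full \lneed terms the two sides coincide.

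The real work is the \betaneedck case, which must deliver alternative~(2). Here I would first show that the source term decomposes as $\inhole{\E}{r}$ for some evaluation context $\E$ and a genuine \betaneednr redex $r$. Unfolding the source via compositionality gives $\inhole{\buildF{\Fs}}{\ap{\inhole{\buildF{\Fs_2}}{\lx\inhole{\buildF{\Fs_1}}{x}}}{\inhole{\buildF{\Fs_3}}{\val}}}$. To exhibit $r$ I must check every shape condition of the axiom: that $\buildF{\Fs}$ splits as $\inE{\Ap}$, that $\buildF{\Fs_1} \in \inAm{\E}$ and $\buildF{\Fs_2},\buildF{\Fs_3} \in \A$, and that $\inAp{\Am}\in\A$. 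The condition $\buildF{\Fs_3}\in\A$ is the stated side condition of \betaneedck (the argument has reduced to an answer), while the remaining conditions are exactly those imposed when the \texttt{bod} frame was created by the \lookupvarck preceding this step; they survive as an invariant because $\Fs$, $\Fs_1$, and $\Fs_2$ are left untouched while the argument is evaluated above the \texttt{bod} frame. Matching term to axiom then identifies $\A_1=\buildF{\Fs_2}$, $\A_2=\buildF{\Fs_3}$, the answer argument $\Av[_2]=\inhole{\buildF{\Fs_3}}{\val}$, the demanded body $\inhole{\buildF{\Fs_1}}{x}=\inAm{\Ex}$, and the outer $\Ap$, so that $r$ is a legitimate redex sitting in the evaluation context $\E$.

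It remains to check that the axiom's contractum equals the target build $\build{\cke[']{\Fs'}}$. This hinges on a substitution-commutation fact, $\buildF{\substx{\Fs_1}{\val}}=\substx{\buildF{\Fs_1}}{\val}$, proved by induction on $\Fs_1$ using the expected definition of frame-list substitution together with Barendregt's hygiene convention; combined with the observation that the relocated value fills the former hole, it yields $\inhole{\buildF{\substx{\Fs_1}{\val}}}{\val}=\substx{\inhole{\buildF{\Fs_1}}{x}}{\val}=\substx{\inAm{\Ex}}{\val}$, precisely the full, uniform substitution of $\val$ for $x$ that \betaneednr performs. Unfolding the target state with compositionality then produces $\inhole{\E}{\inhole{\Ap}{\inA[_1]{\inA[_2]{\substx{\inAm{\Ex}}{\val}}}}}$, exactly $\inhole{\E}{r'}$ where $r \betaneednr r'$; hence $\build{\ckeFs}\srstep\build{\cke[']{\Fs'}}$. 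I expect this \betaneedck case to be the main obstacle: reconciling the machine's frame-list substitution $\substx{\Fs_1}{\val}$---which migrates the reduced argument back into the $\lambda$-body---with the term-level full substitution of the axiom, and verifying that the scattered, non-adjacent frames really do reassemble into a well-formed \betaneednr redex firing at exactly the evaluation context $\E$. The three administrative cases are pure bookkeeping; this is where the generalized answer- and partial-answer-context machinery of \lneed earns its keep.
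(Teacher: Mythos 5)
Your proof is correct and follows the approach the paper intends for this lemma (whose detailed proof is omitted from the text): a case analysis on the four CK transitions, in which \pushargck, \descendlamck, and \lookupvarck produce syntactically equal builds and \betaneedck produces exactly one \betaneednr standard-reduction step. Your two supporting facts---compositionality of \buildFname over frame-list concatenation, and the invariant that the \lookupvarck side conditions on $\Fs$, $\Fs_1$, $\Fs_2$ persist until the matching \betaneedck fires (without which the \betaneedck case, whose only stated side condition is $\buildF{\Fs_3} \in \A$, would not reassemble into a well-formed redex for arbitrary states)---supply precisely the details the paper leaves implicit.
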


%
%

Finally, we show how the CK system corresponds to \lstep. The \buildtostepname
 function defined in figure~\ref{fig:CKtolneedandlstep} constructs a \lstep
 term from a CK configuration.
\begin{lemma}
\label{lem:ck->lstep}
  If $\ckeFs \ckstep \cke[']{\Fs'}$, then either:
  \begin{enumerate}
  \item $\buildtostep{\ckeFs} = \buildtostep{\cke[']{\Fs'}}$
  \item $\buildtostep{\ckeFs} \stepstep \buildtostep{\cke[']{\Fs'}}$
  \end{enumerate}
\end{lemma}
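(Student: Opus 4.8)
The plan is to prove the lemma by case analysis on the CK transition $\ckeFs \ckstep \cke[']{\Fs'}$, computing the $\buildtostepname$-image of the source and target of each transition and matching it against the two disjuncts. Guided by the correctness overview, I expect three transitions---\pushargck, \lookupvarck, and \betaneedck---to land in case~(1), i.e. to be no-ops under $\buildtostepname$, and \descendlamck---the transition that descends under a $\lambda$---to land in case~(2), realizing exactly one $\betastep$ step in \lstep. Before the case analysis I would isolate one auxiliary fact about $\buildtosteppname$ on answer-context frame lists: whenever $\buildF{\Fs_a}\in\A$, there is a label-introducing substitution $\sigma_{\Fs_a}$, determined by the matched function--argument pairs of $\Fs_a$, such that $\buildtostepp{\Fs_a,\Fs}{\estep} = \buildtosteppnop{\Fs}{\sigma_{\Fs_a}(\estep)}$ for all $\estep$ and $\Fs$. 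This follows by induction on $\Fs_a$ from the last clause of $\buildtosteppname$ (an answer context contains only matched $\lambda$/argument pairs, each of which triggers a substitution and no residual structure), and it lets me consume answer-context prefixes uniformly in every case below.

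The no-op cases are then largely unfolding. For \pushargck the two states have images $\buildtosteppnop{\Fs}{\ap{\e_1}{\e_2}}$ and $\buildtostepp{\argF{\e_2},\Fs}{\e_1}$, and the arg-frame clause of $\buildtosteppname$ rewrites the latter into the former, so they are literally equal. For \lookupvarck the bod-frame clause of $\buildtosteppname$ is engineered to invert the transition: it re-expands $\bodyFx{\Fs_1}{\Fs_2}$ into $\Fs_1,\lxF,\Fs_2,\argF{\e}$, so both states map to $\buildtostepp{\Fs_1,\lxF,\Fs_2,\argF{\e},\Fs}{x}$ and are again equal. The \betaneedck case is the most delicate of the three: I would consume the answer-context prefix $\Fs_3$ with the auxiliary fact, unfold the bod-frame clause, and then reconcile the explicit frame substitution $\substx{\Fs_1}{\val}$ and the frame reordering in the target with the substitutions $\buildtosteppname$ already performs on the source. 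The conceptual reason this is a no-op is that in \lstep the argument was substituted (as a labeled term) as soon as the application appeared, and its reduction to $\val$ was propagated in parallel to every copy during the intermediate steps, so committing the value changes nothing in the \lstep image; making this precise needs a substitution-commutation lemma relying on Barendregt's hygiene convention and the consistent-labeling invariant.

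For \descendlamck I would use the side condition $\bal{\Fs}>0$ to fix the decomposition $\Fs = \Fs_1,\argF{\e_1},\Fs_2$ with $\buildF{\Fs_1}\in\A$ that locates the argument matching the entered $\lambda$. Applying the auxiliary fact to $\Fs_1$ and then the arg-frame clause, the source image simplifies to $\buildtosteppnop{\Fs_2}{\ap{\lx\sigma_{\Fs_1}(\e)}{\e_1}}$, while on the target the last clause of $\buildtosteppname$ fires (its leading frame is now $\lxF$) and, after consuming $\Fs_1$, yields $\buildtosteppnop{\Fs_2}{\sigma_{\Fs_1}(\substx{\e}{\laby{\e_1}})}$. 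Here I rely on well-definedness of $\buildtosteppname$ (the unique partition of an answer context, noted earlier) to guarantee that the clause selects the same $\Fs_1$ singled out by the balance condition. The two images therefore differ by contracting the single redex $\ap{\lx\sigma_{\Fs_1}(\e)}{\e_1}$, which is exactly a $\betastep$ step; a substitution-commutation lemma, together with the freshness of the label and the extension of $\alpha$-equivalence to labels noted for lemma~\ref{lem:ckh->lstep}, identifies $\substx{\sigma_{\Fs_1}(\e)}{\lab{w}{\e_1}}$ with $\sigma_{\Fs_1}(\substx{\e}{\laby{\e_1}})$.

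The main obstacle is the interaction of \descendlamck with the two-branch definition of $\betastep$. When the entered $\lambda$ sits under a label in the \lstep image---that is, when it is a shared, multiply-occurring redex---$\betastep$ contracts all like-labeled copies at once through the whole-program substitution $\substlab{\cdot}{z}{\cdot}$, whereas the machine touches only one occurrence. I must show that the target image already reflects the simultaneous reduction of every copy; this holds because $\buildtosteppname$ substitutes one and the same reduced term for a given variable at all of its occurrences, so that consistent labeling forces the images to agree with whichever branch of $\betastep$ applies. Discharging this case, together with the parallel-substitution bookkeeping in the \betaneedck no-op, is where the real work lies; the remaining arithmetic is routine and mirrors the proof of lemma~\ref{lem:ck->lneed}.
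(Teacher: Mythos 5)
Your proposal follows essentially the same route as the paper's own (only sketched) argument for this lemma, summarized in figure~\ref{fig:correctness} and the surrounding overview: a case analysis on the four CK transitions in which \pushargck, \lookupvarck, and \betaneedck are no-ops under \buildtostepname---the first two by direct unfolding of the arg- and bod-frame clauses---while \descendlamck is the one transition realizing a \betastep, including the whole-program-substitution branch when the contracted redex sits under a label. Your auxiliary fact that an answer-context frame prefix acts on the accumulated term as a labeled substitution, and your attention to the sharing/consistent-labeling subtleties in the \descendlamck and \betaneedck cases, are exactly the ingredients this strategy needs, so there is nothing substantive to correct.
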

%
%


\subsection{Relating all Layers} \label{subsec:relatingmachines}

In the previous subsections, we have demonstrated the correspondence between
 \lstep, the natural semantics, and the \lneed standard reduction sequences via
 lemmas~\ref{lem:ckh->lstep} through \ref{lem:ck->lstep}. We conclude this
 section with the statement of an extensional correctness theorem, where
 \evalnsname is an evaluator defined with the store machine transitions. The
 theorem follows from the composition of the equivalences of our specified
 rewriting systems.

\begin{theorem}  $\evalneedname = \evalnsname$ \end{theorem}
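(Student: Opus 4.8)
The plan is to read the result off the commuting diagram of Figure~\ref{fig:correctness}, using the CK transition system as the hub that connects \lneed to the store machine through \lstep. First I would invoke Theorem~\ref{thm:evalneed=evalneedsr} to replace $\evalneedname$ by the standard-reduction evaluator $\evalneedsrname$, so that it suffices to prove $\evalneedsrname = \evalnsname$. The remaining work is then entirely about the three bridging lemmas already established: Lemma~\ref{lem:ck->lneed} ties the CK system to \lneed standard reduction via $\buildname$, Lemma~\ref{lem:ck->lstep} ties it to \lstep via $\buildtostepname$, and Lemma~\ref{lem:ckh->lstep} ties the store machine to \lstep via $\buildLname$.

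Next I would promote each ``no-op or one step'' simulation lemma to an equality of the induced evaluators. Concretely, consider the evaluator obtained by iterating $\ckstep$ from the initial state $\ckemt$ and declaring success when the control string is a value and the frame list forms an answer context. Lemma~\ref{lem:ck->lneed} says that every $\ckstep$ transition maps under $\build{\cdot}$ either to an identity or to a single $\srstep$ step; since $\build{\ckemt} = \e$ and terminal CK states map to answers, a terminating CK run yields a standard reduction sequence $\e \srsteps \ans$, and conversely every $\srstep$ step is realized by finitely many $\ckstep$ transitions culminating in a genuine $\betaneedck$ transition. Hence the CK evaluator equals $\evalneedsrname$. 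The identical argument, now reading Lemma~\ref{lem:ck->lstep} through $\buildtostep{\cdot}$, identifies the CK evaluator with the \lstep evaluator, and Lemma~\ref{lem:ckh->lstep} through $\buildL{\cdot}$ identifies $\evalnsname$ with the \lstep evaluator. Composing these equalities with Theorem~\ref{thm:evalneed=evalneedsr} gives the chain $\evalneedname = \evalneedsrname = \evalnsname$.

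The main obstacle I anticipate is not any individual simulation step---the lemmas already supply those---but the upgrade from a step-by-step simulation to an exact equality of \emph{evaluators}, which must hold in both the terminating and the diverging case. Two points require care. First, each lemma permits a transition to be a \emph{no-op} on the projected side, so I must rule out infinite runs built solely from administrative transitions (the $\pushargck$, $\descendlamck$, and $\lookupvarck$ steps and their store-machine counterparts); otherwise a divergent machine run could spuriously project onto a term that is already an answer. This calls for a well-founded measure on the administrative portion of a configuration that strictly decreases along every non-reducing transition, guaranteeing that a real $\betaneedck$ (resp. $\betastep$) step is always reached in finitely many moves and hence that divergence is faithfully mirrored in both directions. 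Second, I must check that terminal machine states correspond \emph{exactly} to answers under each projection, and that the projections are onto the answers, so that ``the machine halts'' and ``the calculus reaches an answer'' are genuinely equivalent rather than one-directional.

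Finally, a subordinate but pervasive difficulty is the bookkeeping of fresh names, $\alpha$-equivalence, and consistent labeling. The \lstep projections $\buildL{\cdot}$ and $\buildtostep{\cdot}$ both introduce fresh labels and rely on the label-extended $\alpha$-equivalence noted with Lemma~\ref{lem:ckh->lstep}; I would verify that each projection is well defined on these equivalence classes and that the consistent-labeling invariant (preserved by $\stepstep$) is maintained, so that the two independent routes into \lstep---from the CK system and from the store machine---land on the same \lstep reduction sequence up to this equivalence. With these invariants in hand, the composition of Lemmas~\ref{lem:ckh->lstep}, \ref{lem:ck->lneed}, and \ref{lem:ck->lstep} yields the theorem.
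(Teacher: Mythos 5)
Your proposal is correct and follows essentially the same route as the paper: the paper's proof is precisely the composition of the equivalences established by Lemmas~\ref{lem:ckh->lstep}, \ref{lem:ck->lneed}, and \ref{lem:ck->lstep} (together with Theorem~\ref{thm:evalneed=evalneedsr}), using the CK system and \lstep as intermediaries exactly as in Figure~\ref{fig:correctness}. The additional details you flag---promoting per-step simulation to evaluator equality, excluding infinite administrative runs, matching terminal states to answers, and the label/$\alpha$-equivalence bookkeeping---are exactly what the paper delegates to the ``straightforward'' standard technique of~\cite{Felleisen2009Redex}, so your elaboration is consistent with, not divergent from, the published argument.
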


\section{Extensions and Variants}

\textbf{Data Constructors} Real-world lazy languages come with data structure
 construction and extraction operators. Like function arguments, the arguments
 to a data constructor should not be evaluated until there is demand for their
 values~\cite{Friedman1976Cons,Henderson1976Lazy}. The standard \lc encoding of
 such operators~\cite{Barendregt1985LC} works well:
\begin{align*}
  \texttt{cons} = \lx\ly\lm{s}\apthree{s}{x}{y}, \quad
  \texttt{car}  = \lm{p}\ap{p}{\lx\ly x}, \quad
  \texttt{cdr}  = \lm{p}\ap{p}{\lx\ly y}
\end{align*}
Adding true algebraic systems should also be straightforward.

\bigskip

\noindent\textbf{Recursion} Our \lneed calculus represents just a core \lc and
 does not include an explicit \texttt{letrec} constructor for cyclic
 terms. Since cyclic programming is an important idiom in lazy programming
 languages, others have extensively explored cyclic by-need calculi,
 e.g., Ariola and Blum~\cite{Ariola1997Cyclic}, and applying their solutions to our
 calculus should pose no problems.

\section{Conclusion}
 
Following Plotkin's work on call-by-name and call-by-value, we present a
 call-by-need \lc that expresses computation via a single axiom in the spirit
 of $\beta$. Our calculus is close to implementations of lazy languages because
 it captures the idea of by-need computation without retaining every function
 call and without need for re-associating terms. We show that our calculus
 satisfies Plotkin's criteria, including an intensional correspondence
 between our calculus and a Launchbury-style natural semantics. Our future work
 will leverage our \lneed calculus to derive a new abstract machine for lazy
 languages.

\paragraph{Acknowledgments} We thank J. Ian Johnson, Casey Klein, Vincent 
 St-Amour, Asumu Takikawa, Aaron Turon, Mitchell Wand, and the ESOP
 2012 reviewers for their feedback on early drafts. This work was supported in
 part by NSF Infrastructure grant CNS-0855140 and AFOSR grant FA9550-09-1-0110.

\bibliographystyle{splncs03}
\bibliography{betaneed}

\end{document}